\documentclass[12pt,english]{article}
\usepackage[T1]{fontenc}
\usepackage[latin9]{inputenc}
\usepackage{geometry}
\geometry{verbose,tmargin=4cm,bmargin=4cm,lmargin=4cm,rmargin=4cm}
\usepackage{color}
\usepackage{mathtools}
\usepackage{amsmath}
\usepackage{amsthm}
\usepackage{amssymb}
\usepackage{esint}

\usepackage{easyReview}				

\makeatletter
\theoremstyle{plain}
\newtheorem{thm}{\protect\theoremname}
\theoremstyle{remark}
\newtheorem{rem}[thm]{\protect\remarkname}

\theoremstyle{definition}
\newtheorem{example}[thm]{\protect\examplename}

\theoremstyle{remark}
\newtheorem{cor}[thm]{\protect\corollaryname}

\AtBeginDocument{

}\usepackage{babel}

\allowdisplaybreaks

\usepackage{natbib}         

\usepackage{hyperref}       

\usepackage{graphicx}				

\providecommand{\theoremname}{Theorem}

\providecommand{\remarkname}{Remark}

\makeatother

\usepackage{babel}
\providecommand{\remarkname}{Remark}
\providecommand{\theoremname}{Theorem}

\providecommand{\corollaryname}{Corollary}
\providecommand{\examplename}{Example}

\begin{document}
\title{Computing the Probability of a Financial Market Failure: A New Measure
of Systemic Risk}
\author{
    Robert Jarrow\\
    \small{S.C. Johnson Graduate School of Management}\\
    \small{Cornell University}\\
    \small{Ithaca NY, 14853}\\
    \small{ORCID 0000-0001-9893-9611}\\
    \small{raj15@cornell.edu}
    \and
    Philip Protter\thanks{Supported in part by NSF grant DMS-2106433} \\
    \small{Department of Statistics}\\
    \small{Columbia University}\\
    \small{New York, NY, 10027}\\
    \small{ORCID 0000-0003-1344-0403}\\
    \small{pep2117@columbia.edu}
    \and
    Alejandra Quintos\thanks{Supported in part by the Office of the Vice Chancellor for Research and Graduate Education at the University of Wisconsin-Madison with funding from the Wisconsin Alumni Research Foundation and by the Fulbright-Garc\'{i}a Robles Program} \thanks{Corresponding author}\\
    \small{Department of Statistics}\\
    \small{University of Wisconsin-Madison}\\
    \small{Madison, WI, 53706}\\
    \small{ORCID 0000-0003-3447-3255}\\
    \small{alejandra.quintos@wisc.edu}
}
\maketitle
\pagebreak

\begin{abstract}
This paper characterizes the probability of a market failure defined as the default of two or more globally systemically important banks (G-SIBs) in a small interval of time. The default probabilities of the G-SIBs are correlated through the possible existence of a market-wide stress event. The characterization employs a multivariate Cox process across the G-SIBs, which allows us to relate our work to the existing literature on intensity-based models. Various theorems related to market failure probabilities are derived, including the probability of a market failure due to two banks defaulting over the next infinitesimal interval, the probability of a catastrophic market failure, the impact of increasing the number of G-SIBs in an economy, and the impact of changing the initial conditions of the economy's state variables. We also show that if there are too many G-SIBs, a market failure is inevitable, i.e., the probability of a market failure tends to 1.
\end{abstract}
Key words: Systemic risk, market failure probabilities, G-SIBs, multivariate
Cox processes.
\section{Introduction and Summary}

For regulators, characterizing the probability of a financial market
failure, or systemic risk, is important because such a characterization
enables them to understand how their regulatory actions affect its
magnitude. In this regard, numerous systemic risk measures have been
proposed in the literature, each with associated benefits and limitations.
For literature reviews of the existing collection of systemic risk
measures, see \cite{Bisias.Flood.Lo.2012}
and \cite{Engle.2018}. This paper provides another measure
of systemic risk, different from the existing set. According to the
systemic risk measure taxonomies in \cite{Bisias.Flood.Lo.2012},
ours is a macroeconomic or macroprudential measure, which is a based
on a default intensity model. As such, it is a forward-looking measure,
which satisfies the following characteristics: 
\begin{enumerate}
\item it is consistent with the economic theories relating to the causes
of financial market failures (macroeconomic), 
\item it uses the existing regulatory designations of globally systemically
important banks (G-SIBs), financial institutions that are ``too big
to fail'' (macroeconomic), 
\item it can be estimated using existing hazard rate methodologies (default
intensity), and 
\item it facilitates quantifying the impact of regulatory policy changes
on systemic risk (macroprudential). 
\end{enumerate}
Our measure of systemic risk is the\emph{ probability that any two
G-SIBs default at the ``same time.'' }A G-SIB is any financial institution
that has been designated by the Financial Stability Board (FSB) as
large enough such that if it fails, its failure affects the health
of the financial system. Operationally, a bank is designated as a
G-SIB if various indicators of its financial health, in aggregate,
exceed some threshold (see FSB \citeyear{FSB.2020} and BIS \citeyear{BIS.2014}).
There were 30 such G-SIBs designated by the FSB in 2020. By the ``same
time'' we mean within a short time period of each other, say 1 week.

The idea underlying our measure is that if one G-SIBs fails, regulators
can manage the resulting crisis to ensure that a market wide failure
does not occur. Examples of such past episodes include the failure
of Long Term Capital Management in 1998 and Lehman Brothers, together with Bear Sterns, in 2008.
For both of these episodes, regulators were able to manage the crisis
and prevent a market-wide failure. However, if two (or more) G-SIBs
fail within a short time period of each other, then our measure asserts
that the crisis is uncontrollable by regulators and the market fails.

Our measure is consistent with economic theories of market failures
because, in a reduced form fashion, it implicitly includes the causes
for the failure, e.g. the ``drying-up'' of short-term funding, the
bursting of an asset price bubble, or the propagation of defaults
in a network of banks due to inter-linked funding (see \cite{Allen.Carletti.2013}, 
\cite{Acemoglu.2015}, \cite{Jarrow.Lam.2019}). And, it also explicitly
incorporates the marginal impact of a G-SIBs's default on the probability
of a financial market failure (see \cite{Acharya.Ped.2009} for related discussion).

By its definition, our measure builds upon the fundamental analysis
already done by regulators in identifying financial institutions that
are G-SIBs. In the identification of G-SIBs, regulators
include public information (market prices, macroeconomic statistics,
a financial institution's annual reports), non-public information
available via the regulatory channel, and expert judgement (see BIS
\citeyear{BIS.2014}). The use of the G-SIB designations as a basis
for our systemic risk measure, which include this non-public information
and expert judgement, yields an additional benefit not available with
the use of publicly available information alone.

Our measure can also be estimated due to its construction, because
the probability of a market failure incorporates the existing marginal
probabilities of a G-SIB defaulting. These probabilities can be obtained
as in the existing hazard rate estimation literature, see \cite{Chava.Jarrow.2004}, 
\cite{Campbell.2008}, \cite{Shumway.2001}. The
data necessary to compute these marginal default probabilities is
publicly available, and they consist of historical data on financial
institutions defaults, annual reports (balance sheet data), and market
variables (prices and macroeconomic statistics). Finally, given the analytic representation
of our systemic risk measure, it is easy to compute the impact of
a regulatory policy change on the probability of a market failure,
e.g., such regulatory actions might be the breaking-up of a G-SIB
or the increase in a G-SIB's capital. These regulatory changes correspond
to modifying various input variables underlying the market failure
probabilities and determining their impact on the resulting value.

Our paper falls within the credit risk literature studying correlated
firm defaults. There are two approaches in this literature. One approach
is to have the firms' default intensities dependent on each other,
either through information events or counterparty risk. This method
typically uses Cox processes with independent default indicators implying
simultaneously occurring default times happen with zero probability.
The second approach extends these models to allow for a strictly positive
probability of simultaneous defaults. Our paper lies within the second
class. Papers in both of these genres include \cite{Giesecke.2003}, 
\cite{Lindskog.McNeil}, \cite{Bielecki.Cousin.Crepey.Herbertsson}, 
\cite{Bielecki.Rutkowski}, \cite{Brigo.Pallavicini.Torresetti.2}, 
(\citeyear{Brigo.Pallavicini.Torresetti.1}),
\cite{Coculescu}, \cite{ElKaroui.Jeanblanc.Jiao}, and \cite{Liang.Wang}.

The mathematical model in our paper can be viewed as a special case
of that presented in \cite{Bielecki.Cousin.Crepey.Herbertsson}.
A key difference is that we apply our model to characterize systemic
risk, while \cite{Bielecki.Cousin.Crepey.Herbertsson} study the
pricing of basket credit derivatives, in particular credit default
obligations (CDOs). In addition, as previously noted, we derive a
collection of results related to the probability of a systemic risk
event occurrence that are new to this literature.

An outline for the paper is as follows. Section 2 presents the model,
while section 3 contains the key theorems. Section 4 provides comparative
statics and concludes the paper.

\section{The Model}

The following model is based on \cite{Protter.Quintos.2021}.
Fix a filtered probability space $\left(\Omega,\mathcal{F},\mathbb{P},\mathbb{F}\right)$
satisfying the usual conditions and large enough to support an $\mathbb{R}^{d}$
- valued right continuous with left limits existing stochastic process
$X=\{X_{t},t\geq0\}$ and $K+1$ independent exponential random variables
each with parameter $1$, i.e. $(Z_{i},i=0,...,K)$. 
It must be noted that the $K+1$ exponential random variables are independent of each other and
of the stochastic process $X$.

Consider a financial market that contains $i=1,...,K$ financial institutions
that are classified as G-SIBs, i.e. too big to fail. There can be
numerous other financial institutions in the market, but their existence
will not be explicitly included in our systemic risk measure. However,
these non-G-SIBs are implicitly included as will be subsequently noted.

The stochastic process $X$ represents a vector of state variables
characterizing the health of the economy and the $K$ G-SIBs. It includes
macro variables such as the inflation rate, the unemployment rate,
the level of interest rates, and G-SIB specific balance sheet quantities
such as their capital ratios.

\subsection{The G-SIBs' Default Times due to Idiosyncratic Events}

Define the default time for the $i^{th}$ G-SIB due to i\emph{diosyncratic
events} as 
\[
\eta_{i}\coloneqq\inf\{s:A_{i}(s)\geq Z_{i}\}
\]
where $A_{i}(s)=\intop_{0}^{s}\alpha_{i}(X_{r})dr$ and $Z_{i}\sim\mathrm{Exp}(1)$.
Assume that $\lim_{s\rightarrow \infty} A_i(s) = \infty$ and, to fix notation, let 
$f_i(x):=  \alpha_i(X_{x})e^{-A_i(x)}$. That is, $f_i(x)$ is the density of $\eta_{i}$ given
the sigma-field generated by $X$ over $[0,\infty)$ denoted as 
$\mathcal{F}^{X}_\infty:=\sigma\left(X_{t}:t\in[0,\infty)\right)$. The condition 
$\lim_{s\rightarrow \infty} A_i(s) = \infty$ ensures that $f_i$ is a proper density, i.e., 
it integrates to 1. 

The process $\alpha_{i}(\cdot):\mathbb{R}^{d}\rightarrow[0,\infty)$
is the default intensity of the $i^{th}$ G-SIB dependent upon the
state variable process $X$. The default intensity is assumed to be
a non-random, positive, continuous function. This implies that $A_{i}(s)$
are continuous and strictly increasing for any $s\geq0$. We note
that the default intensity of the $i^{th}$ G-SIB can be estimated
using standard hazard rate estimation techniques as \cite{Chava.Jarrow.2004}, 
\cite{Campbell.2008}, \cite{Shumway.2001}.

An idiosyncratic event causing default for a G-SIB is one that is
unique to the bank, after conditioning on the state variable process
$X$. For example, it could be due to fraudulent trades by a rogue
trader or incompetent management. As defined, by construction, the
idiosyncratic event default times of the G-SIBs are Cox processes,
conditionally independent across G-SIBs given $\mathcal{F}^{X}_\infty.$
This implies that $\mathbb{P}\left(\eta_{i}=\eta_{j}\right)=0$ for
$i\neq j$. More explicitly, by taking an expectation on the following
expression, 
\begin{align*}
\mathbb{P}\left(\eta_{i}=\eta_{j}|\mathcal{F}^{X}_\infty\right) & =\int_{0}^{\infty}\int_{0}^{\infty}1_{\{x=y\}}f_{{i}}(x)f_{{j}}(y)dydx\\
 & =\int_{0}^{\infty}\mathbb{P}\left(\eta_{j}=x|\mathcal{F}^{X}_\infty\right)f_{{i}}(x)dx=0
\end{align*}
where $f_{{i}}(x)$ and $f_{{j}}(y)$ are the continuous densities 
of $\eta_{i}$ and $\eta_{j}$ given $\mathcal{F}^{X}_\infty$.
\begin{example}
(\emph{Destructive Competition})\label{exa:(Destructive-Competition)}

A useful example of an idiosyncratic default intensity is one that
depends on the number of G-SIBs, i.e. $A_{i}(t,K):=\int_{0}^{t}\alpha_{i}(X_{u},K)du$,
where the marginal probability of a default increases with $K$ for
each $i=1,2,\dots,K$.

The interpretation is that as the number of G-SIBs increase, the banks
compete more aggressively with each other to maintain market share
and profitability. In doing so, they take on riskier investments to
increase expected returns, which in turn, increases idiosyncratic
default risk. This is in fact what occurred prior to the credit crisis
of 2007 when financial institutions invested in riskier AAA rated
collateralized debt obligations (CDOs) instead of the riskless AAA
rated U.S. Treasuries to obtain increased yields
(see \cite{Crouhy.2009} or \cite{Protter.2009} for a more detailed
explanation).

A special case of this intensity is when $\alpha_{i}(X_{t},K)=\ln(K)+b(X_{t})$
for an appropriate measurable and integrable $b(\cdot):\mathbb{R}^{d}\rightarrow[0,\infty)$.
Then, 
\[
A_{i}(t,K):=\int_{0}^{t}\alpha_{i}(X_{u},K)du=t\ln(K)+\int_{0}^{t}b(X_{u})du=t\ln(K)+B(t).
\]
\end{example}

\subsection{Market-Wide Stress Events}

Next, define $\eta_{0}$ to be the occurrence of a market-wide stress
event, as distinct from an idiosyncratic default event specific to
a single G-SIB. For example, it could be the drying up of short term
funding markets, the bursting of an asset price bubble (as in the
housing market prior to the 2007 credit crisis), or a large collection
of non G-SIBs defaulting in a short period of time. This market-wide
stress event implicitly includes the influence of the remaining non-G-SIBs
in the market, and their inter-relationships among themselves and
the G-SIBs.

Define the first time that a market-wide stress event occurs as 
\[
\eta_{0}\coloneqq\inf\{s:A_{0}(s)\geq Z_{0}\}
\]
where $A_{0}(s)=\intop_{0}^{s}\alpha_{0}(X_{r})dr$, $Z_{0}\sim\mathrm{Exp}(1)$,
and $\alpha_{0}(\cdot):\mathbb{R}^{d}\rightarrow[0,\infty)$ is a
non-random, positive, continuous function. 
Assume that $\lim_{s\rightarrow \infty} A_0(s) = \infty$ and let 
$f_0(x):=  \alpha_0(X_{x})e^{-A_0(x)}$. That is, $f_0(x)$ is the density of $\eta_{0}$ given
$\mathcal{F}^{X}_\infty$.

Note that, because of the
conditional independence assumption given $\mathcal{F}^{X}_\infty$, we have
$\mathbb{P}\left(\eta_{i}=\eta_{0}\right)=0$ for all $i$. Here the
intensity process of a market-wide stress event, $\alpha_{0}(X)$,
probably cannot be estimated using historical time series data given
the infrequency of their occurrence. However, a financial institution
or regulator can use expert judgement to facilitate the practical
computation of this quantity.

\subsection{The G-SIBs' Default Times}

Finally, we define the default time of the $i^{th}$ G-SIB as 
\[
\tau_{i}=\min(\eta_{0},\eta_{i})
\]
for $i=1,...,K$. This is the first time that either an $i^{th}$ G-SIB
defaults due to an idiosyncratic event or a market-wide stress event
occurs. Note that this definition implicitly characterizes the market-wide
stress event as one which is catastrophic enough to cause G-SIBs to
default on their obligations.

Given the above structure, we have that the survival probability of
the $i^{th}$ G-SIB is
\begin{equation}
\mathbb{P}\left(\tau_{i}>t\right)=\mathbb{E}\left[\exp\left(-A_{0}(t)-A_{i}(t)\right) \right].
\end{equation}

\begin{rem}
(\emph{Destructive Competition})

For the special case of destructive competition (see Example \ref{exa:(Destructive-Competition)}),
we have that the $i^{th}$ G-SIB's survival probability $\mathbb{P}\left(\tau_{i}>t\right)=\mathbb{E}\left[e^{-A_{0}(t)-A_{i}(t,K)} \right] $
is decreasing as $K$ increases. This implies, of course, that as the number
of G-SIB's increases, the probability of any single G-SIB defaulting
increases.
\end{rem}

\subsection{The Market Failure Time}

We now can define a financial market failure. To fix the intuition,
as an initial attempt, we first define a financial market failure
as the event 
\[
\left\{ \omega\in\Omega:\tau_{i}=\tau_{j}\;\mathrm{for\:some}\:(i,j)\in\left(1,...,K\right)\times\left(1,...,K\right),i\neq j\right\} ,
\]
and the probability of a financial market failure, our systemic risk
measure, as 
\[
\mathbb{P}\left(\tau_{i}=\tau_{j}\;\mathrm{for\:some}\:(i,j)\in\left(1,...,K\right)\times\left(1,...,K\right),i\neq j\right).
\]
This is the probability that two G-SIBs default at the exactly the
same time. The idea underlying this market-failure probability is
that if one G-SIB fails, regulators can manage the resulting failure
to ensure that a market-wide failure does not occur. However, if two
(or more) G-SIBs fail at the same time, then such an event is uncontrollable
by the regulators, resulting in a market-wide failure.

Unfortunately, there is a problem with this initial systemic risk
measure. Given the definition of the $i^{th}$ G-SIB's default time
$\tau_{i}$ and the conditional independence assumption given $\mathcal{F}^{X}_\infty$
across $\eta_{i}$ for $i=0,1,...,K$, a market failure occurs under
this definition with probability one if and only if $\eta_{0}\leq\eta_{i}$
and $\eta_{0}\leq\eta_{j}$ for some pair $i\neq j$. This is because
$\mathbb{P}\left(\eta_{i}=\eta_{j}\right)=0$ for $i\neq j$, $(i,j)\in\left(1,...,K\right)\times\left(1,...,K\right)$.
In essence, a market failure only occurs under this definition, in
probability, when a market-wide stress event occurs. In probability,
the existence of G-SIBs is irrelevant to this initial systemic risk
measure. To remove this problem, we generalize the definition of a
financial market failure event to be 
\[
\left\{ \omega\in\Omega:\left|\tau_{i}-\tau_{j}\right|<\varepsilon\;\mathrm{for\:some}\:(i,j)\in\left(1,...,K\right)\times\left(1,...,K\right),i\neq j\right\} 
\]
for a given $\varepsilon>0$, and our (final) systemic risk measure
as 
\[
\mathbb{P}\left(\left|\tau_{i}-\tau_{j}\right|<\varepsilon\;\mathrm{for\:some}\:(i,j)\in\left(1,...,K\right)\times\left(1,...,K\right),i\neq j\right).
\]
Under this systemic risk measure, a market failure can occur for two
reasons: a market-wide stress event occurs, or two G-SIBs experience
idiosyncratic default events within an $\varepsilon$ time period
of each other. The next section characterizes this market-wide default
probability.

\section{Theorems}

This section provides the key theorems characterizing the probability
distribution of defaults times for the various G-SIBs (``banks'')
and the probability of a market failure.

For easiness of notation, through the rest of this paper, let:
\begin{equation}
	\tilde{\mathbb{P}}(\cdot) := \mathbb{P}\left(\cdot | \mathcal{F}^{X}_\infty \right).
\end{equation}
This implies, for example, that $\tilde{\mathbb{P}}(\eta_j > t) = \exp \left(-A_j(t)\right)$ as we now show by recalling
the definition of $\eta_j$:
\begin{equation*}
	\tilde{\mathbb{P}}(\eta_j > t) =\tilde{\mathbb{P}} \left(Z_j > A_j(t)\right) = \exp \left(-A_j(t)\right).
\end{equation*}
Also, let $P_K$ stand for the set of all possible permutations of $(1,2,\dots, K)$. Hence, if $j\in P_K$, then
	$j=\left(j_{1}=\sigma(1),j_{2}=\sigma(2),\dots,j_{K}=\sigma(K)\right)$.

\subsection{The Joint Distribution of Banks' Default Times}

Our first theorem characterizes the joint probability distribution
of the banks' default times $\left(\tau_{1},\tau_{2},\dots,\tau_{K}\right)$. 
\begin{thm}
    \label{Theorem.Joint.Distribution} (Joint Distribution $\left(\tau_{1},\tau_{2},\dots,\tau_{K}\right)$).
    \begin{multline}
        \mathbb{P}\left(\tau_{1}>t_{1},\tau_{2}>t_{2},\dots,\tau_{K}>t_{K}\right)= \\
        \mathbb{E}\left[\exp\left(-\sum\limits_{i=1}^{K}A_{i}(t_{i})-A_{0}\left(\max\left(t_{1},t_{2},\dots,t_{K}\right)\right)\right)\right]. \label{eq: joint distribution}   
    \end{multline}
\end{thm}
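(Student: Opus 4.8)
The plan is to condition on $\mathcal{F}^X_\infty$ and use the conditional independence of $(\eta_0,\eta_1,\dots,\eta_K)$. Once we condition on the state-variable path, each $\eta_i$ has the explicit conditional survival function $\tilde{\mathbb{P}}(\eta_i>s)=e^{-A_i(s)}$ and the $\eta_i$ are conditionally independent, so the tower property reduces the whole computation to a finite-dimensional product. Concretely, since $\tau_i=\min(\eta_0,\eta_i)$, the event $\{\tau_i>t_i\}$ is exactly $\{\eta_0>t_i\}\cap\{\eta_i>t_i\}$, and hence
\[
\bigcap_{i=1}^{K}\{\tau_i>t_i\}=\Bigl\{\eta_0>\max_{1\le i\le K}t_i\Bigr\}\cap\bigcap_{i=1}^{K}\{\eta_i>t_i\}.
\]
The key observation is that the single event $\{\eta_0>t_i\}$ for every $i$ collapses to $\{\eta_0>\max_i t_i\}$ because $\eta_0$ is common to all the $\tau_i$; this is what produces the $A_0(\max(t_1,\dots,t_K))$ term rather than a sum over $A_0(t_i)$.

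Next I would take the conditional probability given $\mathcal{F}^X_\infty$. By the conditional independence of $\eta_0,\eta_1,\dots,\eta_K$ given $\mathcal{F}^X_\infty$,
\[
\tilde{\mathbb{P}}\Bigl(\bigcap_{i=1}^{K}\{\tau_i>t_i\}\Bigr)
=\tilde{\mathbb{P}}\Bigl(\eta_0>\max_i t_i\Bigr)\prod_{i=1}^{K}\tilde{\mathbb{P}}(\eta_i>t_i)
=\exp\Bigl(-A_0(\max_i t_i)\Bigr)\prod_{i=1}^{K}\exp(-A_i(t_i)),
\]
where each factor is evaluated using $\tilde{\mathbb{P}}(\eta_j>s)=\tilde{\mathbb{P}}(Z_j>A_j(s))=e^{-A_j(s)}$, valid because $A_j(s)$ is $\mathcal{F}^X_\infty$-measurable and $Z_j$ is independent of $X$ with an $\mathrm{Exp}(1)$ law. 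Combining the exponentials gives
\[
\tilde{\mathbb{P}}\Bigl(\bigcap_{i=1}^{K}\{\tau_i>t_i\}\Bigr)
=\exp\Bigl(-\sum_{i=1}^{K}A_i(t_i)-A_0(\max(t_1,\dots,t_K))\Bigr).
\]
Finally I would take $\mathbb{E}[\cdot]$ of both sides and invoke the tower property $\mathbb{P}(\cdot)=\mathbb{E}[\tilde{\mathbb{P}}(\cdot)]$ to obtain \eqref{eq: joint distribution}.

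The only genuine subtlety — and the step I would be most careful about — is justifying the conditional-independence factorization rigorously: one must argue that given $\mathcal{F}^X_\infty$ the random variables $\eta_0,\dots,\eta_K$ are conditionally independent, which follows from the fact that $Z_0,\dots,Z_K$ are (unconditionally) i.i.d.\ $\mathrm{Exp}(1)$ and independent of $X$, so that after conditioning on $X$ they remain independent while the deterministic maps $A_i(\cdot)$ become known increasing functions; hence $\eta_i=A_i^{-1}(Z_i)$ are measurable functions of independent variables and thus conditionally independent. The continuity and strict monotonicity of $A_i$ (already assumed) make $A_i^{-1}$ well defined, and the condition $\lim_{s\to\infty}A_i(s)=\infty$ ensures $\eta_i<\infty$ a.s., so no mass escapes to infinity. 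Everything else is routine bookkeeping with the exponential distribution.
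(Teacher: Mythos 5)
Your proposal is correct and follows essentially the same route as the paper: rewrite $\bigcap_i\{\tau_i>t_i\}$ as $\bigcap_i\{\eta_i>t_i\}\cap\{\eta_0>\max_i t_i\}$, factor using the conditional independence of $\eta_0,\eta_1,\dots,\eta_K$ given the state-variable path, and conclude by the tower property. The only cosmetic difference is that you condition on $\mathcal{F}^X_\infty$ while the paper conditions on $\left(X_u\right)_{u\leq \max_i t_i}$, which changes nothing in the argument.
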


\begin{proof}
Let $M:=\max\left(t_{1},t_{2},\dots,t_{K}\right)$.
Then,
\begin{multline*}
    \mathbb{P}\left(\tau_{1}>t_{1},\tau_{2}>t_{2},\dots,\tau_{K}>t_{K}| \left(X_{u}\right)_{ u\leq M} \right) =\mathbb{P}\left(\bigcap_{i=1}^{K}\eta_{i}\wedge\eta_{0}>t_{i}| \left(X_{u}\right)_{ u\leq M}\right)\\
    =\mathbb{P}\left(\eta_{1}>t_{1},\dots,\eta_{K}>t_{K},\eta_{0}>M| \left(X_{u}\right)_{ u\leq M} \right)\\
    =\exp\left(-\sum\limits _{i=1}^{K}A_{i}(t_{i})-A_{0}\left(M\right)\right).
\end{multline*}
The last equality follows from the mutual conditional independence assumption
of $\eta_{1},\eta_{2},\dots,\eta_{K},\eta_{0}$.
We can conclude the theorem by taking an additional expectation. 
\end{proof}
This distribution is a multivariate version of the Cox process, that is,
marginally each bank's default time is a Cox process. However, the
default times across the banks are not independent. The difference
in the joint distribution, relative to a standard Cox process, is
due to the last term in the exponent, $A_{0}\left(\max\left(t_{1},t_{2},\dots,t_{K}\right)\right)$,
which depends on the distribution of the first $K$ default times
exceeding the given times $t_{1},t_{2},\dots,t_{K}$. The form of
this multivariate distribution is tractable, facilitating subsequent
computations.
\begin{rem}
(\emph{Destructive Competition})

When there is destructive competition (see Example \ref{exa:(Destructive-Competition)}),
the joint survival probability of the $K$ banks is decreasing in
the number of G-SIBs, i.e. 
\begin{multline*}
    \mathbb{P}\left(\tau_{1}>t_{1},\tau_{2}>t_{2},\dots,\tau_{K}>t_{K}\right)=\\
    \mathbb{E}\left[\exp\left(-\sum\limits_{i=1}^{K}A_{i}(t_{i},K)-A_{0}\left(\max\left(t_{1},t_{2},\dots,t_{K}\right)\right)\right)\right]
\end{multline*}
is decreasing as $K$ increases.
\end{rem}

\medskip{}

\begin{cor}
(\emph{Constant Default Intensities}) \par
If $\alpha_i(X_t) = \alpha_i$ for all $t\geq 0$, all $i=1, 2, \dots, K$, and where $\alpha_i\in \mathbb{R}^+$, then:
\begin{multline}
    \mathbb{P}\left(\tau_{1}>t_{1},\tau_{2}>t_{2},\dots,\tau_{K}>t_{K}\right) \\
    =\exp\left[-\sum\limits _{i=1}^{K}\alpha_{i}t_{i}-\alpha_{0}\max\left(t_{1},t_{2},\dots,t_{K}\right)\right].
\end{multline}
\end{cor}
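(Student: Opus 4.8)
The plan is to deduce this directly from Theorem~\ref{Theorem.Joint.Distribution} by observing that constant intensities collapse the expectation in \eqref{eq: joint distribution} to its (deterministic) integrand. First I would record that under the hypothesis $\alpha_i(X_t)\equiv\alpha_i$ (and, so that the right-hand side makes sense, also $\alpha_0(X_t)\equiv\alpha_0$) one has
\[
A_i(t)=\int_0^t\alpha_i\,dr=\alpha_i t\qquad\text{and}\qquad A_0(t)=\alpha_0 t
\]
for every $t\ge0$ and every $i=1,\dots,K$. I would also note in passing that the standing assumption $\lim_{s\to\infty}A_i(s)=\infty$ is automatically satisfied, since each $\alpha_i$ is strictly positive, so Theorem~\ref{Theorem.Joint.Distribution} applies verbatim.

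Next I would substitute these expressions into \eqref{eq: joint distribution}. With the $t_i$ fixed, the exponent
\[
-\sum_{i=1}^{K}A_i(t_i)-A_0\bigl(\max(t_1,\dots,t_K)\bigr)=-\sum_{i=1}^{K}\alpha_i t_i-\alpha_0\max(t_1,\dots,t_K)
\]
no longer depends on $\omega$; it is a constant. Hence the outer expectation in \eqref{eq: joint distribution} is the expectation of a constant, which equals that constant, and the claimed formula follows at once.

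There is essentially no obstacle here: the only point that merits a word is the disappearance of the expectation operator, which is justified precisely because constant intensities make the integrated hazards $A_i$ and $A_0$ deterministic, eliminating all dependence on the state-variable path $X$. If one did not wish to assume $\alpha_0$ constant, the same computation would give the hybrid formula $\exp\bigl(-\sum_{i=1}^K\alpha_i t_i\bigr)\,\mathbb{E}\bigl[\exp(-A_0(\max(t_1,\dots,t_K)))\bigr]$; taking $\alpha_0$ constant as well then reduces this to the stated closed form.
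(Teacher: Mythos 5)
Your proof is correct and is exactly the intended derivation: the paper states this corollary without proof as an immediate consequence of Theorem \ref{Theorem.Joint.Distribution}, and your substitution $A_i(t)=\alpha_i t$, $A_0(t)=\alpha_0 t$ with the observation that the now-deterministic exponent makes the expectation trivial is precisely that argument. Your side remark that $\alpha_0$ must also be taken constant for the stated closed form (the corollary's hypothesis literally covers only $i=1,\dots,K$) is a fair reading of an implicit assumption in the paper.
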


Next, we can deduce the probability of a market failure. 

\subsection{The Market Failure Probability}

This is the key theorem in our paper. 
\begin{thm}
(Market Failure Probability) \label{Thm Market Failure Probability}\textcolor{black}{{}
\begin{multline}
    \mathbb{P}\left(\left|\tau_{i}-\tau_{j}\right|<\varepsilon\;for\:some\:(i,j)\in\left(1,...,K\right)\times\left(1,...,K\right),{\color{red}{\normalcolor i\neq j}}\right)=\\
    {\normalcolor 1-\mathbb{E}\Biggl[\sum_{j\in P_K}\int\limits _{0}^{\infty}f_{j_{1}}(x_{1})\int\limits _{x_{1}+\varepsilon}^{\infty}f_{j_{2}}(x_{2})\int\limits _{x_{2}+\varepsilon}^{\infty}f_{j_{3}}(x_{3})\dots\int\limits _{x_{K-3}+\varepsilon}^{\infty}f_{j_{K-2}}(x_{K-2})}\\
    \int\limits _{x_{K-2}+\varepsilon}^{\infty}f_{j_{K-1}}(x_{K-1})\exp\left[-A_{j_{K}}(x_{K-1}+\varepsilon)-A_{0}(x_{K-1}+\varepsilon)\right]\,dx_{K-1}\,dx_{K-2}\dots\\
    \dots\,dx_{3}\,dx_{2}\,dx_{1}\Biggr]. \label{eq: result}
\end{multline}
}
\end{thm}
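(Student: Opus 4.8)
The plan is to compute the complementary probability --- that \emph{no} two banks default within $\varepsilon$ of each other --- by conditioning on $\mathcal{F}^X_\infty$, and then to unwind the event in terms of the idiosyncratic times $\eta_1,\dots,\eta_K$ and the stress time $\eta_0$. First I would observe that if the market-wide stress event occurs early, i.e.\ $\eta_0 \le \max_i \eta_i$, then at least two of the $\tau_i = \min(\eta_0,\eta_i)$ collapse onto (or land within $\varepsilon$ of) $\eta_0$, so a market failure is automatic; more carefully, on the complement event $\{\,|\tau_i-\tau_j|\ge\varepsilon \text{ for all } i\ne j\,\}$ all the $\tau_i$ are distinct and spread out, which forces $\eta_0 > \tau_i = \eta_i$ for every $i$ (equivalently $\eta_0 > \max_i\eta_i$) and forces the $\eta_i$ themselves to be pairwise $\varepsilon$-separated. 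So the no-failure event equals $\{\eta_0 > \max_i \eta_i\} \cap \{\,|\eta_i - \eta_j| \ge \varepsilon \ \forall i\ne j\,\}$, up to a null set.

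Next I would compute $\tilde{\mathbb{P}}$ of this event. Conditionally on $\mathcal{F}^X_\infty$ the $\eta_i$, $i=0,1,\dots,K$, are independent with $\eta_i$ having density $f_i$ and $\tilde{\mathbb{P}}(\eta_0 > s) = e^{-A_0(s)}$. Decompose the $\varepsilon$-separation event according to the order statistics of $(\eta_1,\dots,\eta_K)$: for each permutation $j=(j_1,\dots,j_K)\in P_K$, consider the region $\eta_{j_1} < \eta_{j_2} < \dots < \eta_{j_K}$ with consecutive gaps at least $\varepsilon$. On this region the extra constraint $\eta_0 > \max_i \eta_i = \eta_{j_K}$ contributes the factor $\tilde{\mathbb{P}}(\eta_0 > \eta_{j_K}\mid \eta_{j_K}) = e^{-A_0(\eta_{j_K})}$. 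Integrating $f_{j_1}(x_1)\cdots f_{j_K}(x_K)e^{-A_0(x_K)}$ over $x_1 < x_2 - \varepsilon, \ x_2 < x_3 - \varepsilon, \dots$, and doing the innermost $x_K$-integral first using $\int_{x_{K-1}+\varepsilon}^\infty f_{j_K}(x_K)e^{-A_0(x_K)}\,dx_K$, I would need to recognize that $\int_a^\infty \alpha_{j_K}(X_x)e^{-A_{j_K}(x)}e^{-A_0(x)}\,dx = \int_a^\infty e^{-A_{j_K}(x)-A_0(x)}\,d\!\big(A_{j_K}(x)+A_0(x)\big) - \int_a^\infty \alpha_0(X_x)e^{-A_{j_K}(x)-A_0(x)}\,dx$; more directly, since we actually want the pieces to telescope, the cleanest route is to keep the innermost integral as $e^{-A_{j_K}(x_{K-1}+\varepsilon) - A_0(x_{K-1}+\varepsilon)}$ by instead writing $\tilde{\mathbb{P}}(\eta_{j_K} > x_{K-1}+\varepsilon,\ \eta_0 > x_{K-1}+\varepsilon \mid \mathcal{F}^X_\infty) = e^{-A_{j_K}(x_{K-1}+\varepsilon) - A_0(x_{K-1}+\varepsilon)}$ and handling $\eta_{j_K}$ together with $\eta_0$ rather than integrating $f_{j_K}$ out separately. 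That is, I would split off the largest idiosyncratic time last: require $\eta_{j_1} < \dots < \eta_{j_{K-1}}$ with gaps $\ge\varepsilon$ and $x_{j_{K-1}}$ integrated via $f_{j_{K-1}}$, and then demand both $\eta_{j_K}$ and $\eta_0$ exceed $x_{K-1}+\varepsilon$, which gives exactly the factor $\exp[-A_{j_K}(x_{K-1}+\varepsilon) - A_0(x_{K-1}+\varepsilon)]$ appearing in \eqref{eq: result}. Summing over $j \in P_K$ accounts for all orderings (a.s.\ the $\eta_i$ are distinct so the orderings partition the no-failure event up to null sets), and taking $\mathbb{E}[\cdot]$ removes the conditioning, yielding $1$ minus the displayed expression.

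The main obstacle, and the step requiring the most care, is the bookkeeping that reduces $\{|\tau_i-\tau_j|\ge\varepsilon\ \forall i\ne j\}$ to the clean event $\{\eta_0 > \max_i \eta_i\}\cap\{\eta_i \text{ pairwise }\varepsilon\text{-separated}\}$: one must check that any scenario with $\eta_0 \le \max_i\eta_i$ indeed produces two $\tau$'s within $\varepsilon$ (if $\eta_0 \le \eta_{i}$ and $\eta_0 \le \eta_j$ then $\tau_i = \tau_j = \eta_0$; if $\eta_0$ falls strictly between some $\eta_i$ and $\eta_j$, then $\tau_i = \eta_0$ and one still has to argue a collision with another bank, which is where $K \ge 3$ versus $K = 2$ and the almost-sure distinctness of the $\eta_i$ matter), and conversely that on the no-failure event the stress time cannot interfere. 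A secondary technical point is justifying the order-statistics decomposition and the interchange of $\mathbb{E}[\cdot]$ with the nested integrals via Fubini--Tonelli (everything is nonnegative, so Tonelli applies and this is routine). Once the event identity is in hand, the integral on the right of \eqref{eq: result} is just the conditional probability written out by iterated integration against the conditional densities, so the remainder is mechanical.
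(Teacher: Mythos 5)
Your reduction of the no-failure event is wrong, and this is exactly the delicate step. You claim that $\{\,|\tau_i-\tau_j|\ge\varepsilon\ \forall i\neq j\,\}$ forces $\eta_0>\max_i\eta_i$, so that the no-failure event equals $\{\eta_0>\max_i\eta_i\}\cap\{\eta_i\ \text{pairwise }\varepsilon\text{-separated}\}$. That is false: take $K=2$, $\varepsilon=1$, $\eta_1=1$, $\eta_0=5$, $\eta_2=10$; then $\tau_1=1$, $\tau_2=\eta_0=5$, so no failure occurs although $\eta_0<\max_i\eta_i$. The correct statement is only that at most one idiosyncratic time may exceed $\eta_0$ (two banks with $\eta_i,\eta_j>\eta_0$ would share $\tau_i=\tau_j=\eta_0$); equivalently $\eta_0$ must exceed the \emph{second largest} of the $\eta_i$. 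This is why the paper's proof decomposes over $2(K!)$ permutations of $(\eta_0,\eta_1,\dots,\eta_K)$ — $\eta_0$ either the largest or the second largest — and only after merging the two innermost integrals (the $\alpha_{\sigma(K)}$ and $\alpha_0$ pieces) does the factor $\exp[-A_{j_K}(x_{K-1}+\varepsilon)-A_0(x_{K-1}+\varepsilon)]$ appear. If you actually integrated over the event you claim, the innermost integral would be $\int_{x_{K-1}+\varepsilon}^{\infty}\alpha_{j_K}(X_x)e^{-A_{j_K}(x)-A_0(x)}\,dx$, which is strictly smaller than $e^{-A_{j_K}(x_{K-1}+\varepsilon)-A_0(x_{K-1}+\varepsilon)}$ whenever $\alpha_0>0$, so your stated event identity does not reproduce the theorem.

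Your ``cleanest route'' — requiring $\eta_{j_1}<\dots<\eta_{j_{K-1}}$ with gaps $\ge\varepsilon$ and then demanding \emph{both} $\eta_{j_K}>x_{K-1}+\varepsilon$ and $\eta_0>x_{K-1}+\varepsilon$ — does integrate over the right event, because its union over permutations is the no-failure event: on it, either $\eta_0>\eta_{j_K}$ (all $\tau_i=\eta_i$, all gaps hold) or $\eta_{j_{K-1}}+\varepsilon\le\eta_0\le\eta_{j_K}$ (bank $j_K$ defaults at $\eta_0$, which is $\varepsilon$-separated from the rest). But as written you present this merely as a computational shortcut; it is in fact a \emph{different} event from the one in your first paragraph (it drops the constraint $\eta_0>\eta_{j_K}$), it contradicts your claimed identity, and you never prove it coincides with $\{\,|\tau_i-\tau_j|\ge\varepsilon\ \forall i\neq j\,\}$. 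That missing case-analysis — the scenario where $\eta_0$ falls between the largest idiosyncratic time and all the others, so that one $\tau$ equals $\eta_0$ without any collision — is precisely the content your proof lacks, and your closing remarks about ``$K\ge3$ versus $K=2$'' show the case structure was not pinned down. So the proposal has a genuine gap at the central event-identification step; the integration and the Tonelli/conditioning parts are fine.
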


\begin{proof}
For $k=1,\dots,K$, let $\tau_{(k)}$ be the $k^{th}$ order statistic
of $\left(\tau_{1},\tau_{2},\dots,\tau_{K}\right)$. For example,
$\tau_{(1)}=\min\left(\tau_{1},\tau_{2},\dots,\tau_{K}\right)$ and
$\tau_{(K)}=\max\left(\tau_{1},\tau_{2},\dots,\tau_{K}\right)$. Similarly,
for $k=1,2,\dots,K,K+1$, let $\eta_{(k)}$ be the $k^{th}$ order
statistic of $\left(\eta_{0},\eta_{1},\eta_{2},\dots,\eta_{K}\right)$.
For example, $\eta_{(1)}=\min\left(\eta_{0},\eta_{1},\eta_{2},\dots,\eta_{K}\right)$
and $\eta_{(K+1)}=\max\left(\eta_{0},\eta_{1},\eta_{2},\dots,\eta_{K}\right)$.

Note that by construction $\eta_{0},\eta_{1},\eta_{2},\dots,\eta_{K}$
are mutually independent.

The complement of the event $\left\{ \left|\tau_{i}-\tau_{j}\right|<\varepsilon\;\mathrm{for\:some}\:(i,j),\:i\neq j\right\} $ is equal to $\left\{ \left|\tau_{i}-\tau_{j}\right|\geq\varepsilon\;\mathrm{for\:all}\:(i,j),\:i\neq j\right\} $ and hence,
\begin{multline*}
    \tilde{\mathbb{P}} \left(\left|\tau_{i}-\tau_{j}\right|\geq\varepsilon\;\mathrm{for\:all}\:(i,j),\:i\neq j\right) \\
    = \tilde{\mathbb{P}}\left(\tau_{\left(K\right)}\geq\tau_{\left(K-1\right)}+\varepsilon,\tau_{\left(K-1\right)}\geq\tau_{\left(K-2\right)}+\varepsilon,\dots,\tau_{\left(2\right)}\geq\tau_{\left(1\right)}+\varepsilon\right)\\
    =\tilde{\mathbb{P}}\left(\eta_{\left(K+1\right)}\geq\eta_{\left(K\right)}\geq\eta_{\left(K-1\right)}+\varepsilon,\eta_{\left(K-1\right)}\geq\eta_{\left(K-2\right)}+\varepsilon,\dots \right.\\
    \left.\dots,\geq\eta_{\left(1\right)}+\varepsilon,\eta_{0}\geq\eta_{\left(K\right)}\right).
\end{multline*}
If $\eta_{0}$ was not greater or equal to $\eta_{(K)}$, then there
exists some $k<K$ such that $\eta_{(k)}=\eta_{0}$. This implies
that $\tau_{(k)}=\tau_{(k+1)}=\dots=\tau_{(K)}$, which is the situation
we wish to avoid.

Now, we can work with the $K+1$ independent random variables, i.e.,
$\left(\eta_{0},\eta_{1},\eta_{2},\dots,\eta_{K}\right)$. We just
need to compute the distribution of the order statistics, which follows
by the usual construction of taking the sum over all possible permutations.
Let us start by finding the number of possible permutations. Because
of the restriction $\eta_{0}\geq\eta_{\left(K\right)}$, there are
$2\left(K!\right)$ permutations and then, 
\begin{multline}
    \tilde{\mathbb{P}} \left(\eta_{\left(K+1\right)}\geq\eta_{\left(K\right)}\geq\eta_{\left(K-1\right)}+\varepsilon,\eta_{\left(K-1\right)}\geq\eta_{\left(K-2\right)}+\varepsilon,\dots\right.\\
    \left.\dots,\eta_{\left(2\right)}\geq\eta_{\left(1\right)}+\varepsilon,\eta_{0}\geq\eta_{\left(K\right)}\right)=\sum_{j\in\tilde{P}}\int\limits _{0}^{\infty}f_{j_{1}}(x_{1})\int\limits _{x_{1}+\varepsilon}^{\infty}f_{j_{2}}(x_{2})\dots \\
    \dots \int\limits _{x_{K-1}+\varepsilon}^{\infty}f_{j_{K}}(x_{K})\int\limits _{x_{K}}^{\infty}f_{j_{K+1}}(x_{K+1})\,dx_{K+1}\,dx_{K}\dots\,dx_{2}\,dx_{1}\label{eq: (1)}
\end{multline}
where the sum is taken over all $2(K!)$ possible permutations such
that 
\begin{align*}
    j=\left(j_{1}=\sigma(1),j_{2}=\sigma(2),\dots,j_{K}=\sigma(K),j_{K+1}=0\right)
    \intertext{or}
    j=\left(j_{1}=\sigma(1),j_{2}=\sigma(2),\dots j_{K}=0,j_{K+1}=\sigma(K)\right).
\end{align*}
We can simplify (\ref{eq: (1)}). Let us fix one of the permutations
$j=\left(j_{1},j_{2},\dots,j_{K+1}\right)$ of $K$ such that
$j_{1}=\sigma(1),j_{2}=\sigma(2),\dots,j_{K}=\sigma(K),j_{K+1}=0$
and let us focus on the 2 innermost integrals,

\begin{multline} \label{eq: (2)}
    \int\limits _{x_{K-1}+\varepsilon}^{\infty}f_{j_{K}}(x_{K})\int\limits _{x_{K}}^{\infty}f_{j_{K+1}}(x_{K+1})\,dx_{K+1}\,dx_{K}\\
    =\int\limits _{x_{K-1}+\varepsilon}^{\infty}f_{j_{K}}(x_{K})\int\limits _{x_{K}}^{\infty}\alpha_{j_{K+1}}\left(X_{x_{K+1}}\right)e^{-A_{j_{K+1}}\left(x_{K+1}\right)}\,dx_{K+1}\,dx_{K}\\
    =\int\limits _{x_{K-1}+\varepsilon}^{\infty}\alpha_{j_{K}}\left(X_{x_{K}}\right)e^{-A_{j_{K}}\left(x_{K}\right)-A_{j_{K+1}}\left(x_{K}\right)}\,dx_{K}\\
    =\int\limits _{x_{K-1}+\varepsilon}^{\infty}\alpha_{\sigma(K)}\left(X_{x_{K}}\right)e^{-A_{\sigma(K)}\left(x_{K}\right)-A_{0}\left(x_{K}\right)}\,dx_{K}.
\end{multline}
Now, fix a similar permutation to $j$ with the only difference being
that $j_{K}=0$ and $j_{K+1}=\sigma(K)$. By a similar calculation
as in (\ref{eq: (2)}), we get, 
\begin{multline}
    \int\limits _{x_{K-1}+\varepsilon}^{\infty}f_{j_{K}}(x_{K})\int\limits _{x_{K}}^{\infty}f_{j_{K+1}}(x_{K+1})\,dx_{K+1}\,dx_{K}\\
    =\int\limits _{x_{K-1}+\varepsilon}^{\infty}\alpha_{0}\left(X_{x_{K}}\right)e^{-A_{0}\left(x_{K}\right)-A_{\sigma(K)}\left(x_{K}\right)}\,dx_{K}.  \label{Proof 1.3}
\end{multline}
After noting that the $K-1$ outer integrals of the 2 fixed permutations
are the same, we can join (\ref{eq: (2)}) and (\ref{Proof 1.3})
to get, 
\begin{multline} \label{eq: (4)}
    \int\limits _{0}^{\infty}f_{j_{1}}(x_{1})\int\limits _{x_{1}+\varepsilon}^{\infty}f_{j_{2}}(x_{2})\dots \\
    \dots \int\limits _{x_{K-1}+\varepsilon}^{\infty}\left[\alpha_{0}\left(X_{x_{K}}\right)+\alpha_{\sigma(K)}\left(X_{x_{K}}\right)\right]e^{-A_{0}\left(x_{K}\right)-A_{\sigma(K)}\left(x_{K}\right)}\,dx_{K}\dots dx_{1}\\
    =\int\limits _{0}^{\infty}f_{j_{1}}(x_{1})\int\limits _{x_{1}+\varepsilon}^{\infty}f_{j_{2}}(x_{2})\dots \\
    \dots \int\limits _{x_{K-2}+\varepsilon}^{\infty}f_{\sigma(K)}(x_{K-1})e^{-A_{0}\left(x_{K-1}+\varepsilon\right)-A_{\sigma(K)}\left(x_{K-1}+\varepsilon\right)}\,dx_{K-1}\dots dx_{1}.
\end{multline}
Now, we only need to take the sum of terms like (\ref{eq: (4)}) over
all the possible $K!$ permutations of $(1,2,\dots,K)$ and the result
follows by taking a further expectation. 
\end{proof}
This theorem shows that the market failure time is computable given
the banks' and the market-wide stress event's intensity processes.
In this form, it is quite abstract. To understand this probability
better, we consider two special cases in the subsequent remarks.
\begin{cor}
\emph{(Identically Distributed Default Times)}

If we assume $\alpha(X_x)=\alpha_{1}(X_x)=\alpha_{2}(X_x)=\dots=\alpha_{K}(X_x)$
a.s., i.e. $\eta_{1},\eta_{2},\dots,\eta_{K}$ are identically distributed
given $\mathcal{F}^{X}_\infty$, then in Theorem \ref{Thm Market Failure Probability}
we can replace the sum over all the permutations by $K!$. More specifically, 
\begin{multline}
    \mathbb{P}\left(\left|\tau_{i}-\tau_{j}\right|<\varepsilon\;for\:some\:(i,j)\in\left(1,...,K\right)\times\left(1,...,K\right),{\color{red}{\normalcolor i\neq j}}\right) \\
    = 1-\mathbb{E}\left[K!\int\limits _{0}^{\infty}f(x_{1})\int\limits _{x_{1}+\varepsilon}^{\infty}f(x_{2})\dots\int\limits _{x_{K-3}+\varepsilon}^{\infty}f(x_{K-2}\right.) \\
    \left.\int\limits _{x_{K-2}+\varepsilon}^{\infty}\alpha(X_{x_{K-1}})e^{-A(x_{K-1})-A\left(x_{K-1}+\varepsilon\right)-A_{0}\left(x_{K-1}+\varepsilon\right)}\,dx_{K-1}\,dx_{K-2}\dots dx_{1}\right]
\end{multline}
where $f(x)=\alpha(X_{x})e^{-A(x)}$. 
\end{cor}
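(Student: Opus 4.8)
The plan is to obtain this corollary as an immediate specialization of Theorem~\ref{Thm Market Failure Probability}, the only new ingredient being the symmetry forced by the hypothesis. First I would note that the assumption that $\eta_1,\dots,\eta_K$ are identically distributed given $\mathcal{F}^X_\infty$ translates, under the model, into the pathwise identities $\alpha_1(X_\cdot)=\dots=\alpha_K(X_\cdot)=:\alpha(X_\cdot)$ a.s., hence $A_1=\dots=A_K=:A$ and $f_1=\dots=f_K=:f$ with $f(x)=\alpha(X_x)e^{-A(x)}$, all holding $\mathbb{P}$-a.s.\ simultaneously for all arguments. Feeding these equalities into the right-hand side of \eqref{eq: result}, each term indexed by a permutation $j\in P_K$ becomes the \emph{same} $\mathcal{F}^X_\infty$-measurable random variable, since the integrand no longer depends on which labels $j_1,\dots,j_{K-1},j_K$ the permutation assigns. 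Because $|P_K|=K!$, the sum over $P_K$ collapses to $K!$ copies of one representative term.

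Second, I would simplify that representative term. Its innermost layer in \eqref{eq: result} is $\int_{x_{K-2}+\varepsilon}^{\infty} f_{j_{K-1}}(x_{K-1})\exp\!\left[-A_{j_K}(x_{K-1}+\varepsilon)-A_0(x_{K-1}+\varepsilon)\right]dx_{K-1}$; substituting $f_{j_{K-1}}=f$, writing $f(x_{K-1})=\alpha(X_{x_{K-1}})e^{-A(x_{K-1})}$, and using $A_{j_K}=A$ merges the exponential factors into $\exp\!\left[-A(x_{K-1})-A(x_{K-1}+\varepsilon)-A_0(x_{K-1}+\varepsilon)\right]$, which is exactly the innermost expression in the asserted formula. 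The remaining $K-2$ nested integrals $\int_0^{\infty}f(x_1)\int_{x_1+\varepsilon}^{\infty}f(x_2)\cdots\int_{x_{K-3}+\varepsilon}^{\infty}f(x_{K-2})$ are just \eqref{eq: result} with every $f_{j_i}$ replaced by $f$, so they carry over verbatim. Pulling the constant $K!$ inside the outer expectation then yields the claimed equality.

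I do not anticipate a genuine obstacle. The one place to be slightly careful is the claim that the $K!$ permutation terms are equal \emph{as random variables}, not merely in law: this is what lets us factor out $K!$ before taking $\mathbb{E}[\cdot]$, and it is legitimate precisely because the hypothesis equates the intensity functions $\alpha_i(\cdot)$ almost surely, so the equality of the integrals is pathwise and is preserved both under $\mathbb{P}(\cdot\mid\mathcal{F}^X_\infty)$ and under the outer expectation. I would also point out that the collapse of the $2(K!)$ order-statistic permutations down to $K!$ was already carried out inside the proof of Theorem~\ref{Thm Market Failure Probability}, when the two innermost integrals were merged in \eqref{eq: (4)}, so that reduction need not be repeated here.
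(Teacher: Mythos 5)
Your proposal is correct and is exactly the argument the paper intends: the corollary is stated as an immediate specialization of Theorem \ref{Thm Market Failure Probability}, obtained by substituting $f_{j_i}=f$, $A_{j_K}=A$ so that all $K!$ permutation terms coincide pathwise and the sum collapses to $K!$ times the common term, with the innermost integrand rewritten as $\alpha(X_{x_{K-1}})e^{-A(x_{K-1})-A(x_{K-1}+\varepsilon)-A_0(x_{K-1}+\varepsilon)}$. Your remark that the equality of the permutation terms is pathwise (not merely in law), which justifies factoring out $K!$ inside the conditional probability before taking the outer expectation, is the right point of care and matches the paper's reasoning.
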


\medskip{}

\begin{cor}
(\emph{Constant Default Intensities}) 

If $\alpha_i(X_t) = \alpha_i$ for all $t\geq 0$, all $i=1, 2, \dots, K$, and where $\alpha_i\in \mathbb{R}^+$, then the market failure probability is
\begin{multline}
\mathbb{P}\left(\left|\tau_{i}-\tau_{j}\right|<\varepsilon\;for\:some\:(i,j)\in\left(1,...,K\right)\times\left(1,...,K\right),i\neq j{\color{black}}\right)=\\
1-\sum_{j\in P_K}\prod\limits _{i=1}^{K-1}\left[\frac{\alpha_{j_{i}}}{\alpha_{0}+\sum\limits _{k=i}^{K}\alpha_{j_{k}}}e^{-\varepsilon\left(\alpha_{0}+\sum\limits _{k=i+1}^{K}\alpha_{j_{k}}\right)}\right]. \label{eq: mkt failure prob constant inten}
\end{multline}
Here, we see that the market failure probability is easily computed
given estimates of the default intensities.
\end{cor}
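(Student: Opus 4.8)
The plan is to specialize Theorem~\ref{Thm Market Failure Probability} to the case of constant intensities and evaluate the resulting nested integrals explicitly. Since each $\alpha_i(X_t) = \alpha_i$ is constant, we have $A_i(t) = \alpha_i t$ and $f_i(x) = \alpha_i e^{-\alpha_i x}$, so $X$ plays no role and the outer expectation $\mathbb{E}[\cdot]$ disappears. What remains is, for each fixed permutation $j \in P_K$, a chain of integrals of exponential densities with shifted lower limits, terminating in the factor $\exp[-A_{j_K}(x_{K-1}+\varepsilon) - A_0(x_{K-1}+\varepsilon)] = \exp[-(\alpha_{j_K} + \alpha_0)(x_{K-1}+\varepsilon)]$.

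The key computational step is to evaluate the integrals from the inside out. I would first establish the inductive pattern: define, for $i = K-1, K-2, \dots, 1$, the partial integral $I_i(x_{i-1})$ (with $x_0 := 0$, so the outermost integral starts at $0$ rather than $x_0 + \varepsilon$) obtained by integrating out $x_{K-1}, \dots, x_i$, and show that it has the closed form
\[
I_i(x_{i-1}) = e^{-\left(\alpha_0 + \sum_{k=i}^{K}\alpha_{j_k}\right) x_{i-1}} \cdot \prod_{\ell=i}^{K-1} \left[\frac{\alpha_{j_\ell}}{\alpha_0 + \sum_{k=\ell}^{K}\alpha_{j_k}}\, e^{-\varepsilon\left(\alpha_0 + \sum_{k=\ell+1}^{K}\alpha_{j_k}\right)}\right]
\]
for $i \geq 2$, with the appropriate modification at $i=1$ where the lower limit is $0$ instead of $x_0 + \varepsilon$, killing the $e^{-\varepsilon(\cdots)}$ factor at level $\ell = 1$ but... actually the product in the statement runs $i=1$ to $K-1$ and does include an $e^{-\varepsilon(\alpha_0 + \sum_{k=2}^K \alpha_{j_k})}$ term, so I need to be careful: the outermost integral $\int_0^\infty f_{j_1}(x_1)(\cdots)\,dx_1$ integrates $\alpha_{j_1}e^{-\alpha_{j_1}x_1}$ against $I_2(x_1)$, and the $\varepsilon$-shift that produces the level-$1$ exponential factor comes from the lower limit $x_1 + \varepsilon$ of the $x_2$-integral, which is internal to $I_2$. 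So the inductive claim should be stated for the integral over $x_i, \dots, x_{K-1}$ with the $x_i$-integral having lower limit $x_{i-1} + \varepsilon$, and the base case is the single innermost integral $\int_{x_{K-2}+\varepsilon}^{\infty} \alpha_{j_{K-1}} e^{-\alpha_{j_{K-1}}x_{K-1}} e^{-(\alpha_{j_K}+\alpha_0)(x_{K-1}+\varepsilon)}\,dx_{K-1}$, which integrates to $\frac{\alpha_{j_{K-1}}}{\alpha_0 + \alpha_{j_{K-1}} + \alpha_{j_K}} e^{-\varepsilon(\alpha_0 + \alpha_{j_K})} e^{-(\alpha_0 + \alpha_{j_{K-1}} + \alpha_{j_K})(x_{K-2}+\varepsilon)}$, matching the $\ell = K-1$ factor of the product together with the correct exponential in $x_{K-2}$.

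The induction step is a one-line computation: $\int_{x_{i-1}+\varepsilon}^{\infty} \alpha_{j_i} e^{-\alpha_{j_i} x_i} \cdot C\, e^{-\lambda_{i+1}(x_i + \varepsilon)}\,dx_i = \frac{\alpha_{j_i}}{\alpha_{j_i} + \lambda_{i+1}} C\, e^{-\varepsilon \lambda_{i+1}} e^{-(\alpha_{j_i} + \lambda_{i+1})(x_{i-1}+\varepsilon)}$, where $\lambda_{i+1} = \alpha_0 + \sum_{k=i+1}^K \alpha_{j_k}$ and $\alpha_{j_i} + \lambda_{i+1} = \alpha_0 + \sum_{k=i}^K \alpha_{j_k} =: \lambda_i$. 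This is just the elementary fact $\int_a^\infty e^{-cx}\,dx = c^{-1}e^{-ca}$ applied repeatedly, and it propagates the closed form from level $i+1$ to level $i$. Finally, after reaching $i = 1$ I evaluate $\int_0^\infty \alpha_{j_1} e^{-\alpha_{j_1}x_1} I_2(x_1)\,dx_1$; since $I_2(x_1) = (\text{const}) e^{-\lambda_2 x_1}$ with the constant being $\prod_{\ell=2}^{K-1}[\cdots]$, this integral is $\frac{\alpha_{j_1}}{\alpha_{j_1} + \lambda_2}(\text{const}) = \frac{\alpha_{j_1}}{\lambda_1}\prod_{\ell=2}^{K-1}[\cdots]$, and since $\lambda_1 = \alpha_0 + \sum_{k=1}^K \alpha_{j_k}$ and the factor at $\ell = 1$ should be $\frac{\alpha_{j_1}}{\lambda_1} e^{-\varepsilon\lambda_2} = \frac{\alpha_{j_1}}{\alpha_0 + \sum_{k=1}^K \alpha_{j_k}} e^{-\varepsilon(\alpha_0 + \sum_{k=2}^K \alpha_{j_k})}$ — wait, this introduces an extra $e^{-\varepsilon\lambda_2}$ not present unless it came from $I_2$; I will need to double-check the bookkeeping of which $\varepsilon$-shifts land at which level, but in the constant-intensity case every step is transparent, so reconciling this against the stated formula~(\ref{eq: mkt failure prob constant inten}) is straightforward. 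Summing over the $K!$ permutations $j \in P_K$ and subtracting from $1$ gives the claimed expression; no outer expectation survives because nothing depends on $X$.

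The main obstacle is purely notational rather than mathematical: keeping the indices of the partial sums $\sum_{k=i}^K \alpha_{j_k}$ versus $\sum_{k=i+1}^K \alpha_{j_k}$ straight through the induction, and making sure the $\varepsilon$-shift contributed at each integration level is attributed to the correct factor in the final product, in particular verifying that the outermost integral (lower limit $0$, not $\varepsilon$) does not contribute a spurious $\varepsilon$-exponential while still producing the $\ell=1$ term of the product as written. Once the induction hypothesis is stated with the correct $\lambda_i$ and the correct placement of the $e^{-\varepsilon \lambda_{i+1}}$ factor, everything else is a routine geometric-type telescoping.
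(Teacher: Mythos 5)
Your proposal is correct and takes essentially the same route as the paper: substitute the constant intensities into Theorem \ref{Thm Market Failure Probability} (so $f_k(x)=\alpha_k e^{-\alpha_k x}$, the outer expectation drops, and the nested exponential integrals are solved for each permutation), which is exactly what the paper does, with your version merely making the inside-out induction explicit. The bookkeeping wrinkle you flag resolves itself once the induction hypothesis is stated with the full factor $e^{-\lambda_i\left(x_{i-1}+\varepsilon\right)}$, where $\lambda_i:=\alpha_0+\sum_{k=i}^{K}\alpha_{j_k}$ (this is what your base case and one-line step actually produce, not $e^{-\lambda_i x_{i-1}}$); then the outermost integral over $x_1\in[0,\infty)$ contributes $\alpha_{j_1}/\lambda_1$ while the $e^{-\varepsilon\lambda_2}$ already carried inside $I_2$ supplies the $\ell=1$ exponential, yielding precisely the product in (\ref{eq: mkt failure prob constant inten}).
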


\begin{proof}
(Expression (\ref{eq: mkt failure prob constant inten})) Use Theorem
(\ref{Thm Market Failure Probability}) with $\alpha_{k}(x)\equiv\alpha_{k}$
which implies $f_{k}(x)=\alpha_{k}e^{-\alpha_{k}x}$ for all $k$.
Solve the integrals to get

\begin{multline*}
    \mathbb{P}\left(\left|\tau_{i}-\tau_{j}\right|\geq\varepsilon\text{ for all }(i,j),\:i\neq j\right)\\
    = \sum_{j\in P_K} \left[\frac{\alpha_{j_{1}}}{\alpha_{0}+\alpha_{j_{1}}+\alpha_{j_{2}}+\dots+\alpha_{j_{K}}}\right]\left[\frac{\alpha_{j_{2}}}{\alpha_{0}+\alpha_{j_{2}}+\dots+\alpha_{j_{K}}}\right]\dots \\
    \dots\left[\frac{\alpha_{j_{K-2}}}{\alpha_{0}+\alpha_{j_{K-2}}+\alpha_{j_{K-1}}+\alpha_{j_{K}}}\right]\left[\frac{\alpha_{j_{K-1}}}{\alpha_{0}+\alpha_{j_{K-1}}+\alpha_{j_{K}}}\right]\exp\left[-\varepsilon\left(\alpha_{0}+\alpha_{j_{2}}+ \right.\right.\\
    \left. \left. +\alpha_{j_{3}} \dots +\alpha_{j_{K}}\right)\right] \times\exp\left[-\varepsilon\left(\alpha_{0}+\alpha_{j_{3}}+\alpha_{j_{4}}+\dots+\alpha_{j_{K}}\right)\right] \times \dots \\
    \dots\times\exp\left[-\varepsilon\left(\alpha_{0}+\alpha_{j_{K-1}}+\alpha_{j_{K}}\right)\right]\exp\left[-\varepsilon\left(\alpha_{0}+\alpha_{j_{K}}\right)\right].
\end{multline*}
\end{proof}

\begin{example}
	The next numerical example illustrates how our measure of systemic risk (Theorem \ref{Thm Market Failure Probability}) can be implemented in practice. 
	Suppose $K=3$\footnote{We do not present the case for larger $K$ because this imposes computational challenges that are outside the scope of this paper.}, $\varepsilon=1$, and that $\alpha_i(X_t) = \alpha_i$ for all $t\geq 0$, all $i=0, 1, 2, 3$, where $\alpha_i\in \mathbb{R}^+$.
	We assume we have 3 estimates of $\alpha_i$ for $i=1,2,3$, namely $\bar{\alpha}_i, \tilde{\alpha}_i$, and $\hat{\alpha}_i$. Each one of these are random numbers between $0.00004$ and $0.00006$ which are reasonable estimates of $\alpha_i$ for any given G-SIB (See Table 7 in \cite{Duffie.Jarrow.Purnanandam.Yang}).
	
	We further assume that the intensity rate of a market failure is approximately 10 times smaller than the probability of a single bank failing (due to idiosyncratic events). Although 10 times smaller is an arbitrary selection, the intent is to make the likelihood of a market failure an order of magnitude smaller than a single bank's failure. The purpose of which is to illustrate the impact of varying this probability on our systemic risk measure. Hence, we compute the change in our measure of systemic risk when varying $\alpha_0$ between $0.000005$ and $0.00001$. The result is illustrated in figure \ref{fig:plot} where the y-axis represents the numerical value of formula (\refeq{eq: mkt failure prob constant inten}).

\begin{figure} [h!]
	\includegraphics[width=.7\textwidth]{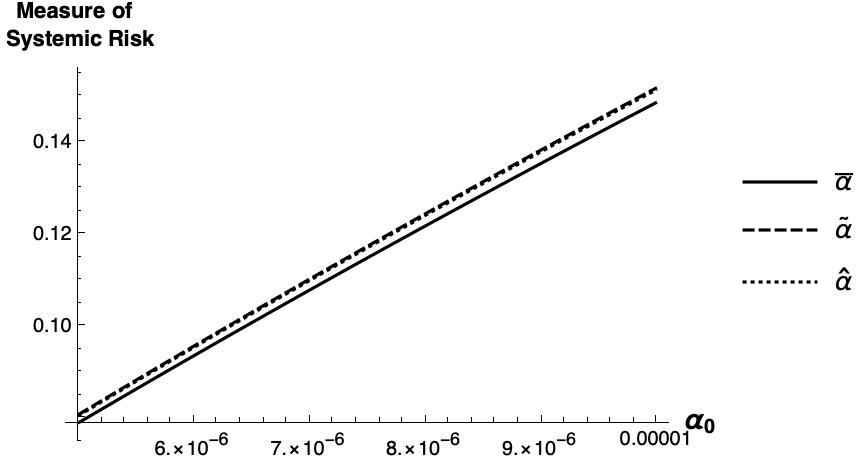}
	\centering
	\caption{Change in our measure of systemic risk for different estimates of $\alpha_i$ ($i=1,2,3$) when varying $\alpha_0$.}
	\label{fig:plot}
\end{figure}

	As expected, our measure of systemic risk increases for larger values of $\alpha_0$.
\end{example}

The following corollary related to the market failure probability
as $\varepsilon\rightarrow0$ follows easily from Theorem \ref{Theorem.Joint.Distribution}.
\begin{cor}
(Market Failure when $\varepsilon\rightarrow0$)
\end{cor}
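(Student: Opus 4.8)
The plan is to derive the statement as the $\varepsilon\downarrow0$ limit of Theorem~\ref{Thm Market Failure Probability}, identifying the limiting event directly from the joint law in Theorem~\ref{Theorem.Joint.Distribution}. First, note that the events $\{\,|\tau_i-\tau_j|<\varepsilon\text{ for some }i\ne j\,\}$ decrease as $\varepsilon\downarrow0$ and that $\bigcap_{\varepsilon>0}\{\,|\tau_i-\tau_j|<\varepsilon\text{ for some }i\ne j\,\}=\{\,\tau_i=\tau_j\text{ for some }i\ne j\,\}$ (intersecting over rational $\varepsilon$ is enough). Continuity of $\mathbb{P}$ from above then gives $\lim_{\varepsilon\downarrow0}\mathbb{P}(|\tau_i-\tau_j|<\varepsilon\text{ for some }i\ne j)=\mathbb{P}(\tau_i=\tau_j\text{ for some }i\ne j)$, so this limiting probability is what the corollary must describe.

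Next I would pin down the event $\{\tau_i=\tau_j\text{ for some }i\ne j\}$. Because $\tau_i=\min(\eta_0,\eta_i)$ and, by the absolute-continuity / conditional-independence facts of Section~2, $\mathbb{P}(\eta_a=\eta_b)=0$ for $0\le a<b\le K$, on a full-measure set a coincidence $\tau_i=\tau_j$ with $i\ne j$ can occur only through $\tau_i=\tau_j=\eta_0$. Equivalently, reading off Theorem~\ref{Theorem.Joint.Distribution}, the conditional joint survival function $\exp\bigl(-\sum_{i}A_i(t_i)-A_0(\max_i t_i)\bigr)$ places mass on a diagonal $\{\tau_i=\tau_j\}$ only through the common term $A_0(\max(\cdot))$, i.e.\ through $\eta_0$. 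Hence $\{\tau_i=\tau_j\text{ for some }i\ne j\}$ equals, up to a null set, the event that $\eta_0\le\eta_i$ for at least two indices $i\in\{1,\dots,K\}$; that is, $\eta_0$ is no larger than the second-largest of $\eta_1,\dots,\eta_K$. This is precisely the heuristic flagged in Section~2: when $\varepsilon=0$, a market failure happens exactly when a market-wide stress event occurs no later than the second-largest idiosyncratic default time.

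To evaluate the probability, I would compute the complement, ``at most one idiosyncratic time exceeds $\eta_0$.'' Conditioning on $\mathcal{F}^X_\infty$ and using the mutual conditional independence of $\eta_0,\dots,\eta_K$, I would decompose this event according to the ordering $\eta_{j_1}<\eta_{j_2}<\dots<\eta_{j_K}$ of the a.s.\ distinct idiosyncratic times (indexed by $j\in P_K$); on the corresponding piece the constraint reads $\eta_{j_1}<\dots<\eta_{j_{K-1}}$ with both $\eta_{j_K}$ and $\eta_0$ larger than $\eta_{j_{K-1}}$. With the conditional densities $f_i$ and $\tilde{\mathbb{P}}(\eta_i>t)=e^{-A_i(t)}$, each piece integrates to
\begin{multline*}
\int_{0}^{\infty}f_{j_{1}}(x_{1})\int_{x_{1}}^{\infty}f_{j_{2}}(x_{2})\cdots\int_{x_{K-2}}^{\infty}f_{j_{K-1}}(x_{K-1})\\
\times\,e^{-A_{j_{K}}(x_{K-1})-A_{0}(x_{K-1})}\,dx_{K-1}\cdots dx_{1},
\end{multline*}
and summing over $j\in P_K$ and taking an outer expectation gives the stated formula. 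Reassuringly, this is exactly \eqref{eq: result} with $\varepsilon=0$: one also reaches it by letting $\varepsilon\downarrow0$ inside the expectation and the nested integrals of Theorem~\ref{Thm Market Failure Probability}, which is legitimate since all integrands are nonnegative and the quantity is bounded by $1$ (dominated/monotone convergence), and since the $A_i$ are continuous, so the lower limits $x_k+\varepsilon$ and the $\varepsilon$-shifts in the exponent pass to the limit.

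I expect no real obstacle: the corollary is ``easy'' exactly as claimed. The only points needing a careful line are (i) the continuity-of-measure argument for the limit on the left, (ii) the interchange of $\varepsilon\downarrow0$ with $\mathbb{E}$ and the iterated integrals on the right, which is immediate from boundedness by $1$, and (iii) the order-statistics bookkeeping showing that the limiting market-failure event is ``$\eta_0\le$ second-largest of $\eta_1,\dots,\eta_K$.'' Granting (iii), the integral evaluation in the third step is the same permutation-sum computation already carried out in the proof of Theorem~\ref{Thm Market Failure Probability}, now with the $\varepsilon$-translations removed.
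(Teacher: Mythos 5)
Your proposal proves a different statement from the one the corollary actually makes. The corollary (the display immediately following it, equation \eqref{instantaneous.rate}) is a local rate result at a fixed time $t$:
\begin{equation*}
\lim_{\varepsilon\rightarrow0}\frac{\mathbb{P}\left(\tau_{1}\in\left(t,t+\varepsilon\right],\tau_{2}\in\left(t,t+\varepsilon\right]\,|\,\tau_{1}>t,\tau_{2}>t\right)}{\varepsilon}=\mathbb{E}\left(\alpha_{0}(X_{t})\right),
\end{equation*}
i.e., the instantaneous rate at which two G-SIBs that have both survived to $t$ default together over the next infinitesimal interval equals the expected market-wide stress intensity. Indeed the paper says it follows from Theorem \ref{Theorem.Joint.Distribution} (the joint distribution), not from Theorem \ref{Thm Market Failure Probability}: its proof conditions on $\left(X_{u}\right)_{u\leq t+\varepsilon}$, applies Theorem \ref{Theorem.Joint.Distribution} with $K=2$ and inclusion--exclusion over the four corner survival probabilities on $(t,t+\varepsilon]^2$, divides by the conditional survival probability $\exp\left[-\left(A_{1}+A_{2}+A_{0}\right)(t)\right]$, uses L'H\^opital's rule in $\varepsilon$ to obtain $\alpha_{0}(X_{t})$ pathwise, and then takes an expectation by bounded convergence.

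What you establish instead is $\lim_{\varepsilon\downarrow0}\mathbb{P}\left(\left|\tau_{i}-\tau_{j}\right|<\varepsilon\text{ for some }i\neq j\right)=\mathbb{P}\left(\tau_{i}=\tau_{j}\text{ for some }i\neq j\right)$, identified with the event that $\eta_{0}$ does not exceed the second largest of $\eta_{1},\dots,\eta_{K}$, and equal to formula \eqref{eq: result} with $\varepsilon=0$. That is a reasonable observation about the unconditional, unnormalized systemic-risk measure, but none of the corollary's ingredients --- the fixed time $t$, the conditioning on $\{\tau_{1}>t,\tau_{2}>t\}$, the division by $\varepsilon$, or the limit $\mathbb{E}\left(\alpha_{0}(X_{t})\right)$ --- appears in your argument, so it does not prove the corollary. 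The missing idea is precisely the normalization by $\varepsilon$: without it one only recovers the probability of an exact coincidence, whereas the corollary extracts the first-order coefficient in $\varepsilon$ of the conditional joint-default probability, which is $\alpha_{0}(X_{t})$ because the idiosyncratic terms contribute only at order $\varepsilon^{2}$; this requires the expansion (or L'H\^opital argument) based on the $K=2$ joint survival function, not the permutation-sum formula of Theorem \ref{Thm Market Failure Probability}.
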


\textcolor{black}{
\begin{equation}
\lim_{\varepsilon\rightarrow0}\frac{\mathbb{P}\left(\tau_{1}\in\left(t,t+\varepsilon\right],\tau_{2}\in\left(t,t+\varepsilon\right]|\tau_{1}>t,\tau_{2}>t,\right)}{\varepsilon}=\mathbb{E}\left(\alpha_{0}(X_{t})\right). \label{instantaneous.rate}
\end{equation}
}
\begin{proof}
Note that

\begin{align*}
\mathbb{P} & \left(\tau_{1}\in\left(t,t+\varepsilon\right],\tau_{2}\in\left(t,t+\varepsilon\right]|\tau_{1}>t,\tau_{2}>t,\left(X_{u}\right)_{u\leq t+\varepsilon}\right)\\
 & =\frac{\mathbb{P}\left(\tau_{1}\in\left(t,t+\varepsilon\right],\tau_{2}\in\left(t,t+\varepsilon\right]|\left(X_{u}\right)_{u\leq t+\varepsilon}\right)}{\mathbb{P}\left(\tau_{1}>t,\tau_{2}>t|\left(X_{u}\right)_{u\leq t+\varepsilon}\right)} .
\end{align*}
For the numerator, using Theorem \ref{Theorem.Joint.Distribution}
with $K=2$, 
\begin{multline*}
    \mathbb{P} \left(\tau_{1}\in\left(t,t+\varepsilon\right],\tau_{2}\in\left(t,t+\varepsilon\right]|\left(X_{u}\right)_{u\leq t+\varepsilon}\right)=\\
    \mathbb{P}\left(\tau_{1}>t,\tau_{2}>t|\left(X_{u}\right)_{u\leq t+\varepsilon}\right) -\mathbb{P}\left(\tau_{1}>t,\tau_{2}>t+\varepsilon|\left(X_{u}\right)_{u\leq t+\varepsilon}\right)\\
    -\mathbb{P}\left(\tau_{1}>t+\varepsilon,\tau_{2}>t|\left(X_{u}\right)_{u\leq t+\varepsilon}\right)    +\mathbb{P}\left(\tau_{1}>t+\varepsilon,\tau_{2}>t+\varepsilon|\left(X_{u}\right)_{u\leq t+\varepsilon}\right)= \\
    \exp\left[-\left(A_{1}+A_2+A_0\right)\left(t\right)\right]- \exp\left[-A_{1}\left(t\right)-\left(A_{2}+A_0\right)\left(t+\varepsilon\right)\right]\\
    -\exp\left[-\left(A_{1}+A_0\right)\left(t+\varepsilon\right)-A_{2}\left(t\right)\right] +\exp\left[-\left(A_{1}+A_2 +A_0 \right)\left(t+\varepsilon\right)\right].
\end{multline*}
Dividing by $\mathbb{P}\left(\tau_{1}>t,\tau_{2}>t|\left(X_{u}\right)_{u\leq t+\varepsilon}\right)=\exp\left[-A_{1}\left(t\right)-A_{2}\left(t\right)-A_{0}\left(t\right)\right]$,
we get: 
\begin{align*}
\mathbb{P} & \left(\tau_{1}\in\left(t,t+\varepsilon\right],\tau_{2}\in\left(t,t+\varepsilon\right]|\tau_{1}>t,\tau_{2}>t,\left(X_{u}\right)_{u\leq t+\varepsilon}\right)=1\\
 & -\exp\left[-\int\limits _{t}^{t+\varepsilon}\left(\alpha_{2}+\alpha_{0}\right)(X_{u})du\right]-\exp\left[-\int\limits _{t}^{t+\varepsilon}\left(\alpha_{1}+\alpha_{0}\right)(X_{u})du\right]\\
 & +\exp\left[-\int\limits _{t}^{t+\varepsilon}\left(\alpha_{1}+\alpha_{2}+\alpha_{0}\right)(X_{u})du\right].
\end{align*}
Hence, by using L'H\^opital's rule, we get: 
\begin{multline*}
    \lim_{\varepsilon\rightarrow0} \frac{\mathbb{P}\left(\tau_{1}\in\left(t,t+\varepsilon\right],\tau_{2}\in\left(t,t+\varepsilon\right]|\tau_{1}>t,\tau_{2}>t,\left(X_{u}\right)_{u\leq t+\varepsilon}\right)}{\varepsilon}=\\
    \left(\alpha_{2}+\alpha_{0}\right)\left(X_{t}\right) +\left(\alpha_{1}+\alpha_{0}\right)\left(X_{t}\right)-\left(\alpha_{1}+\alpha_{2}+\alpha_{0}\right)\left(X_{t}\right)=\alpha_{0}(X_{t}).
\end{multline*}
Moreover, given $\mathbb{P}\left(\tau_{1}\in\left(t,t+\varepsilon\right],\tau_{2}\in\left(t,t+\varepsilon\right]|\tau_{1}>t,\tau_{2}>t,\left(X_{u}\right)_{u\leq t+\varepsilon}\right)$
is bounded by 1 and a conditioning argument, we can conclude that
\begin{equation}
\lim_{\varepsilon\rightarrow0}\frac{\mathbb{P}\left(\tau_{1}\in\left(t,t+\varepsilon\right],\tau_{2}\in\left(t,t+\varepsilon\right]|\tau_{1}>t,\tau_{2}>t,\right)}{\varepsilon}=\mathbb{E}\left(\alpha_{0}(X_{t})\right).
\end{equation}
\end{proof}
\textcolor{black}{That is, the probability of a market failure due
to two banks defaulting over the next time interval $(t,t+\varepsilon]$
with $\varepsilon\approx0$ is equal to the probability of a market-wide
stress event occurring, as noted before.}

\subsection{Catastrophic Market Failure}

It is of interest to determine the probability that all banks will
default within a small interval of time. This would be a catastrophic
market failure.
\begin{thm}
(Occurrence of all the Default Times)

Let \textup{$\tau_{(K)}=max\{\tau_{1},...,\tau_{K}\}$.} 
\begin{equation}
\mathbb{P}\left(\tau_{(K)}\leq\varepsilon\right)=1+\mathbb{E}\left[e^{-A_{0}\left(\varepsilon\right)}\left(\prod\limits _{i=1}^{K}\left(1-e^{-A_{i}\left(\varepsilon\right)}\right)-1\right)\right].
\end{equation}
\end{thm}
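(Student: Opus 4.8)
The plan is to condition on $\mathcal{F}^{X}_\infty$, exploit the mutual conditional independence of $\eta_0,\eta_1,\dots,\eta_K$, and split according to whether the market-wide stress event has already occurred by time $\varepsilon$.

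First I would rewrite the event in terms of the $\eta$'s. Since $\tau_i=\min(\eta_0,\eta_i)$, we have $\tau_{(K)}\le\varepsilon$ if and only if $\min(\eta_0,\eta_i)\le\varepsilon$ for every $i=1,\dots,K$, that is, if and only if for each $i$ either $\eta_0\le\varepsilon$ or $\eta_i\le\varepsilon$. Hence on the event $\{\eta_0\le\varepsilon\}$ the event $\{\tau_{(K)}\le\varepsilon\}$ holds automatically, regardless of the idiosyncratic times, while on $\{\eta_0>\varepsilon\}$ it holds exactly when $\eta_i\le\varepsilon$ for all $i=1,\dots,K$. These two cases are disjoint and exhaustive.

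Next, working under $\tilde{\mathbb{P}}(\cdot)=\mathbb{P}(\cdot\mid\mathcal{F}^{X}_\infty)$ and using that $\eta_0,\eta_1,\dots,\eta_K$ are mutually conditionally independent given $\mathcal{F}^{X}_\infty$ with $\tilde{\mathbb{P}}(\eta_k>\varepsilon)=e^{-A_k(\varepsilon)}$, I would write
\[
\tilde{\mathbb{P}}\left(\tau_{(K)}\le\varepsilon\right)=\tilde{\mathbb{P}}(\eta_0\le\varepsilon)+\tilde{\mathbb{P}}(\eta_0>\varepsilon)\prod_{i=1}^{K}\tilde{\mathbb{P}}(\eta_i\le\varepsilon)=\left(1-e^{-A_0(\varepsilon)}\right)+e^{-A_0(\varepsilon)}\prod_{i=1}^{K}\left(1-e^{-A_i(\varepsilon)}\right).
\]
A one-line algebraic rearrangement gives $\tilde{\mathbb{P}}(\tau_{(K)}\le\varepsilon)=1+e^{-A_0(\varepsilon)}\bigl(\prod_{i=1}^{K}(1-e^{-A_i(\varepsilon)})-1\bigr)$, and taking a further expectation over $\mathcal{F}^{X}_\infty$ (using that the left side is bounded by $1$, so the expectation is finite) yields the claimed formula.

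There is no substantive obstacle here beyond the case bookkeeping; the only point that needs a moment of care is recognizing that $\{\eta_0\le\varepsilon\}$ alone forces $\tau_{(K)}\le\varepsilon$, so the decomposition is genuinely a disjoint union and the conditional-independence product applies cleanly to the second term only.
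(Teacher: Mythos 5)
Your proof is correct and follows essentially the same route as the paper: conditioning on $\mathcal{F}^{X}_\infty$, splitting on $\{\eta_0\le\varepsilon\}$ versus $\{\eta_0>\varepsilon\}$, applying the conditional independence of the $\eta_i$'s to the second piece, and taking a final expectation. No gaps.
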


\begin{proof}
For any $K$, we have 
\begin{align*}
\tilde{\mathbb{P}}\left(\tau_{(K)}\leq\varepsilon\right)= & \tilde{\mathbb{P}}\left(\tau_{1}\leq\varepsilon,\tau_{2}\leq\varepsilon,\dots,\tau_{K}\leq\varepsilon\right)\\
= & \tilde{\mathbb{P}}\left(\tau_{1}\leq\varepsilon,\tau_{2}\leq\varepsilon,\dots,\tau_{K}\leq\varepsilon,\eta_{0}>\varepsilon\right)\\
 & +\tilde{\mathbb{P}}\left(\tau_{1}\leq\varepsilon,\tau_{2}\leq\varepsilon,\dots,\tau_{K}\leq\varepsilon,\eta_{0}\leq\varepsilon\right).\\
\intertext{As \ensuremath{\tau_{i}=\min\left(\eta_{i},\eta_{0}\right)}, the event \ensuremath{\left\{ \tau_{1}\leq\varepsilon,\tau_{2}\leq\varepsilon,\dots,\tau_{K}\leq\varepsilon,\eta_{0}>\varepsilon\right\} } is equal to \ensuremath{\left\{ \eta_{1}\leq\varepsilon,\eta_{2}\leq\varepsilon,\dots,\eta_{K}\leq\varepsilon,\eta_{0}>\varepsilon\right\} } and \ensuremath{\left\{ \tau_{1}\leq\varepsilon,\tau_{2}\leq\varepsilon,\dots,\tau_{K}\leq\varepsilon,\eta_{0}\leq\varepsilon\right\} } is equal to \ensuremath{\left\{ \eta_{0}\leq\epsilon\right\} }}
\tilde{\mathbb{P}}(\tau_{(K)}\leq \varepsilon) = & \tilde{\mathbb{P}}\left(\eta_{1}\leq\varepsilon,\eta_{2}\leq\varepsilon,\dots,\eta_{K}\leq\varepsilon,\eta_{0}>\varepsilon\right)+\tilde{\mathbb{P}}\left(\eta_{0}\leq\varepsilon\right). \\
\intertext{Using the independence of \ensuremath{\eta_{i}}}
\tilde{\mathbb{P}}(\tau_{(K)}\leq \varepsilon) = & \tilde{\mathbb{P}}\left(\eta_{0}>\varepsilon\right)\tilde{\mathbb{P}}\left(\eta_{1}\leq\varepsilon\right)\tilde{\mathbb{P}}\left(\eta_{2}\leq\varepsilon\right)\dots\tilde{\mathbb{P}}\left(\eta_{K}\leq\varepsilon\right)+\tilde{\mathbb{P}}\left(\eta_{0}\leq\varepsilon\right)\\
= & e^{-A_{0}\left(\varepsilon\right)}\left[\prod\limits _{i=1}^{K}\left(1-e^{-A_{i}(\varepsilon)}\right)\right]+1-e^{-A_{0}\left(\varepsilon\right)}\\
= & 1+e^{-A_{0}\left(\varepsilon\right)}\left[\prod\limits _{i=1}^{K}\left(1-e^{-A_{i}(\varepsilon)}\right)-1\right].
\end{align*}
The result follows by taking an additional expectation.
\end{proof}
\medskip{}

\begin{rem}
(\emph{Catastrophic Market Failure with Identical Distributions}) 
\end{rem}

Let $A_{i}(\varepsilon)$ for $i=1,2,\dots,K$ be the same for all
banks and independent of $K$, say $A(\varepsilon)$, implying that
the G-SIBs' default times are identically distributed. Then, $\lim_{K\rightarrow\infty}\prod\limits _{i=1}^{K}\left(1-e^{-A_i\left(\varepsilon\right)}\right)= \lim_{K\rightarrow\infty} \left(1-e^{-A \left(\varepsilon\right)}\right)^K= 0$,
which implies that
\begin{equation}
\lim_{K\rightarrow\infty}\mathbb{P}\left(\tau_{(K)}<\varepsilon\right)=1-\mathbb{E}\left(e^{-A_{0}(\varepsilon)}\right)=\mathbb{P}\left(\eta_{0}<\varepsilon\right).
\end{equation}
As the number of G-SIBs approaches infinity, the probability of a
catastrophic market failure is equal to the probability of a market-wide
stress event occurring. This makes sense since, except for the market-wide
stress event, the probability of idiosyncratic defaults are independent
across G-SIBs.

\medskip{}

\begin{cor}
(\emph{Catastrophic Market Failure with Constant Default Intensities}) \par
If $\alpha_i(X_t) = \alpha_i$ for all $t\geq 0$, all $i=1, 2, \dots, K$, and where $\alpha_i\in \mathbb{R}^+$, then:
\begin{equation}
\mathbb{P}\left(\tau_{(K)}\leq\varepsilon\right)=1+e^{-\alpha_{0}\varepsilon}\left(\prod\limits _{i=1}^{K}\left(1-e^{\alpha_{i}\varepsilon}\right)-1\right).
\end{equation}
\end{cor}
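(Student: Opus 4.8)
The plan is simply to specialize the immediately preceding Theorem (Occurrence of all the Default Times) to the case of constant intensities. First I would observe that if $\alpha_i(X_t)=\alpha_i$ for all $t\geq 0$ and $i=0,1,\dots,K$, then $A_i(\varepsilon)=\int_0^\varepsilon \alpha_i\,dr=\alpha_i\varepsilon$ and $A_0(\varepsilon)=\alpha_0\varepsilon$; the point is that these are now deterministic numbers rather than genuinely $\mathcal{F}^{X}_\infty$-measurable random variables.

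Next I would substitute these values into the conclusion of that theorem, namely
\[
\mathbb{P}\left(\tau_{(K)}\leq\varepsilon\right)=1+\mathbb{E}\left[e^{-A_{0}(\varepsilon)}\left(\prod_{i=1}^{K}\left(1-e^{-A_{i}(\varepsilon)}\right)-1\right)\right].
\]
Since the argument of the expectation has become a constant, the outer expectation is vacuous, and one reads off
\[
\mathbb{P}\left(\tau_{(K)}\leq\varepsilon\right)=1+e^{-\alpha_{0}\varepsilon}\left(\prod_{i=1}^{K}\left(1-e^{-\alpha_{i}\varepsilon}\right)-1\right),
\]
which is the asserted formula (up to the evident sign typo in the displayed statement, whose product should read $\prod_{i=1}^{K}(1-e^{-\alpha_i\varepsilon})$).

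An equivalent route, for a reader who prefers not to invoke the theorem as a black box, is to rerun its short proof directly: with constant intensities the times $\eta_0,\eta_1,\dots,\eta_K$ are unconditionally independent exponentials with parameters $\alpha_0,\alpha_1,\dots,\alpha_K$, so no conditioning on $\mathcal{F}^{X}_\infty$ is needed, and the decomposition $\mathbb{P}(\tau_{(K)}\leq\varepsilon)=\mathbb{P}(\eta_0>\varepsilon)\prod_{i=1}^{K}\mathbb{P}(\eta_i\leq\varepsilon)+\mathbb{P}(\eta_0\leq\varepsilon)$ gives the result after inserting $\mathbb{P}(\eta_i\leq\varepsilon)=1-e^{-\alpha_i\varepsilon}$ and $\mathbb{P}(\eta_0\leq\varepsilon)=1-e^{-\alpha_0\varepsilon}$.

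There is essentially no obstacle here: the only thing worth a word of care is that the identity $A_i(\varepsilon)=\alpha_i\varepsilon$ uses only constancy of $\alpha_i$ in time, which is exactly what renders the integrand deterministic and legitimizes dropping the expectation. So the ``hard part'' is merely bookkeeping.
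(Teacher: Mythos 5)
Your proposal is correct and follows exactly the route the paper intends: the corollary is a direct specialization of the preceding theorem, with $A_i(\varepsilon)=\alpha_i\varepsilon$ deterministic so the expectation drops out, and you rightly note that the exponent in the paper's displayed product should read $e^{-\alpha_i\varepsilon}$ rather than $e^{\alpha_i\varepsilon}$ (a sign typo). Your alternative direct computation with unconditionally independent exponentials is also valid, but no further commentary is needed.
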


\medskip{}

\begin{rem}
(\emph{Destructive Competition})

With destructive competition (see Example \ref{exa:(Destructive-Competition)})
and assuming that for all $i=1,2,\dots,K$, $\alpha_{i}(X_{t},K)=\ln(K)+b(X_{t})$ and $\varepsilon<1$, then
\begin{equation}
\lim_{K\rightarrow\infty}\mathbb{P}\left(\tau_{(K)}<\varepsilon\right)=\mathbb{P}\left(\eta_{0}<\varepsilon\right).
\end{equation}

As documented, under destructive competition, the probability of a catastrophic market failure increases to the indicated limit as the number of GSIBs approaches infinity.
\end{rem}

\begin{proof}
(Destructive Competition) Recall that:
\begin{equation}
\mathbb{P}\left(\tau_{(K)}\leq\varepsilon\right)=1-\mathbb{E}\left[e^{-A_{0}\left(\varepsilon\right)}\left(1-\prod\limits _{i=1}^{K}\left(1-e^{-A_{i}\left(\varepsilon,K\right)}\right)\right)\right].
\end{equation}
If $\alpha_{i}(X_{t},K)=\ln(K)+b(X_{t})$, then 
\begin{align}
\tilde{\mathbb{P}}\left(\tau_{(K)}\leq\varepsilon\right) & =1-e^{-A_{0}\left(\varepsilon\right)}\left(1-\left(1-e^{-A\left(\varepsilon,K\right)}\right)^{K}\right)\nonumber \\
 & =1-e^{-A_{0}\left(\varepsilon\right)}\left(1-\left(1-e^{-\varepsilon\ln(K)-B(\varepsilon)}\right)^{K}\right)\nonumber \\
 & =1-e^{-A_{0}\left(\varepsilon\right)}\left(1-\left(1-\frac{e^{-B(\varepsilon)}}{K^{\varepsilon}}\right)^{K}\right)  \nonumber \\
 & \overset{K\rightarrow\infty}{\longrightarrow}1-e^{-A_{0}\left(\varepsilon\right)}.
\end{align}
The results follow by taking an expectation and interchanging limits,
which we can do because $\tilde{\mathbb{P}}(\cdot)$ is bounded by
1. 
\end{proof}

\subsection{Bounds on the Market Failure Probability}

For some empirical applications, it may be useful to obtain bounds
on the market failure probability as in the subsequent theorem. 
\begin{thm}
(Bounds on the Market Failure Probability)
\begin{multline}
    \mathbb{P}\left(\left|\tau_{i}-\tau_{j}\right|<\varepsilon\text{ for some }(i,j), i\neq j\right)\leq \\ 
    \min \left [ \binom{K}{2}-\sum\limits _{i=1}^{K}\sum\limits _{j\neq i}^{K}\mathbb{E}\left( \int_{0}^{\infty}\alpha_{i}(X_{x})e^{-A_{i}(x)-A_{j}(x+\varepsilon)-A_{0}(x+\varepsilon)}dx\right),  \right. \\
    1-\mathbb{E}\Biggl(\sum_{j\in S(P_K)}\int\limits _{0}^{\infty}f_{j_{1}}(x_{1})\int\limits _{x_{1}+\varepsilon}^{\infty}f_{j_{2}}(x_{2})\int\limits _{x_{2}+\varepsilon}^{\infty}f_{j_{3}}(x_{3})\dots\int\limits _{x_{K-3}+\varepsilon}^{\infty}f_{j_{K-2}}(x_{K-2}) \\
    \int\limits _{x_{K-2}+\varepsilon}^{\infty}f_{j_{K-1}}(x_{K-1})\exp\left[-A_{j_{K}}(x_{K-1}+\varepsilon)-A_{0}(x_{K-1}+\varepsilon)\right]\,dx_{K-1}\,dx_{K-2}\dots\\
    \left.  \dots\,dx_{3}\,dx_{2}\,dx_{1}\Biggr) \right]
\end{multline}
where $S\left(P_K\right)$ stands for a subset of $P_K$ (Recall $P_K$ stands for the set of all possible permutations of $(1,2,\dots, K)$).

\begin{multline}
    \mathbb{P}\left(\left|\tau_{i}-\tau_{j}\right|<\varepsilon\text{ for some }(i,j), i \neq j\right)\geq \\ 
    \max \left[ 1-\mathbb{E}\left( e^{-A_{0}\left((K-1)\varepsilon\right)}\sum\limits _{j\in P_K}\exp\left(-\sum\limits _{i=1}^{K-1}A_{j_{i+1}}\left(i\varepsilon\right)\right)\right), \right. \\
    \left. 1 - \min\limits_{i \neq j} \left( \int_0^\infty \alpha_i(x) e^{-A_i(x)-(A_j+A_0)(x+\varepsilon)}dx +  \int_0^\infty \alpha_j(x) e^{-A_j(x)-(A_i+A_0)(x+\varepsilon)}dx \right)\right].
\end{multline}
\end{thm}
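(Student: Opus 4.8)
The plan is to derive the two pairs of bounds separately, in each case starting from the exact formula in Theorem \ref{Thm Market Failure Probability} (and Theorem \ref{Theorem.Joint.Distribution}) and then either dropping terms or applying a union-bound/Bonferroni-type inequality. For the upper bounds, the second term inside the $\min$ is immediate: the exact market-failure probability equals $1-\mathbb{E}[\sum_{j\in P_K}(\cdots)]$, and since every summand in that integral expression is nonnegative, restricting the sum from $P_K$ to any subset $S(P_K)$ only decreases the subtracted quantity, hence increases the value of $1-(\cdots)$; so the true probability is $\le$ that expression. The first term in the $\min$ is a union bound: the failure event is $\bigcup_{i\neq j}\{|\tau_i-\tau_j|<\varepsilon\}$, so its probability is at most $\sum_{i\neq j}\mathbb{P}(|\tau_i-\tau_j|<\varepsilon)$; one then needs the two-bank marginal $\mathbb{P}(|\tau_i-\tau_j|<\varepsilon)$, which follows from Theorem \ref{Thm Market Failure Probability} with $K=2$ (equivalently the $\varepsilon\to 0$ corollary's pre-limit computation) and equals $1-\mathbb{E}(\int_0^\infty[\alpha_i(X_x)e^{-A_i(x)-A_j(x+\varepsilon)-A_0(x+\varepsilon)}+\alpha_j(X_x)e^{-A_j(x)-A_i(x+\varepsilon)-A_0(x+\varepsilon)}]\,dx)$. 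Summing $1-(\cdots)$ over the $\binom K2$ unordered pairs — equivalently $\tfrac12\sum_{i}\sum_{j\neq i}$ of the symmetric two-term integrand, which is exactly the double sum written in the statement — gives the claimed first upper bound. Taking the minimum of the two valid upper bounds is then trivially also an upper bound.

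For the lower bounds, the plan is symmetric. The second term, $1-\min_{i\neq j}\mathbb{P}(|\tau_i-\tau_j|\ge\varepsilon)$, comes from the trivial inclusion $\{|\tau_i-\tau_j|<\varepsilon\}\subseteq\bigcup_{k\neq \ell}\{|\tau_k-\tau_\ell|<\varepsilon\}$ for each fixed pair $(i,j)$: the failure probability is at least $\mathbb{P}(|\tau_i-\tau_j|<\varepsilon)=1-\mathbb{P}(|\tau_i-\tau_j|\ge\varepsilon)$ for every pair, hence at least the maximum over pairs, i.e. $1-\min_{i\neq j}\mathbb{P}(|\tau_i-\tau_j|\ge\varepsilon)$; plugging in the same $K=2$ two-bank formula yields the displayed expression. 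The first lower-bound term requires more care: starting from the exact $1-\mathbb{E}[\sum_{j\in P_K}I_j]$ with $I_j$ the nested integral, I would bound each $I_j$ from above by replacing every inner lower limit $x_{m}+\varepsilon$ progressively, so that after the innermost integration each factor $f_{j_m}$ contributes at most $\tilde{\mathbb{P}}(\eta_{j_m}>m\varepsilon)=e^{-A_{j_m}(m\varepsilon)}$ shifts; concretely, bound $\int_{x_{K-2}+\varepsilon}^\infty f_{j_{K-1}}(x)e^{-A_{j_K}(x+\varepsilon)-A_0(x+\varepsilon)}dx\le e^{-A_{j_K}((K-1)\varepsilon)-A_0((K-1)\varepsilon)}\int_{x_{K-2}+\varepsilon}^\infty f_{j_{K-1}}(x)dx$ using monotonicity of $A_{j_K},A_0$ and $x\ge x_{K-2}+\varepsilon\ge (K-1)\varepsilon$ (iterating the nesting $x_{m}\ge m\varepsilon$), then peel off $\int f_{j_{K-1}}\le e^{-A_{j_{K-1}}((K-2)\varepsilon)}$ wait — more carefully, integrate out successively, each step contributing a factor bounded by $e^{-A_{j_i}(i\varepsilon)}$ evaluated at the appropriate depth, and collect the $A_0$ factor as $e^{-A_0((K-1)\varepsilon)}$ since the deepest point is $\ge (K-1)\varepsilon$. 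This yields $I_j\le e^{-A_0((K-1)\varepsilon)}\exp(-\sum_{i=1}^{K-1}A_{j_{i+1}}(i\varepsilon))$ after relabeling, so $1-\mathbb{E}[\sum_j I_j]\ge 1-\mathbb{E}[e^{-A_0((K-1)\varepsilon)}\sum_{j\in P_K}\exp(-\sum_{i=1}^{K-1}A_{j_{i+1}}(i\varepsilon))]$; the maximum of the two valid lower bounds is again a lower bound.

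The main obstacle is the bookkeeping in the first lower bound: one must correctly track how the nested integration limits force $x_m \ge m\varepsilon$ at depth $m$, decide which index in the permutation ends up paired with which multiple $i\varepsilon$ (the off-by-one between the running variables $x_1,\dots,x_{K-1}$ and the $K$ intensities $j_1,\dots,j_K$, since $f_{j_K}$ never appears as a density but only inside the exponential), and verify that after taking the supremum over the leftover free integration the surviving factor is precisely $\prod_{i} \int f_{j_i} \le 1$ so that only the exponential survival factors remain. Monotonicity and strict positivity of the $A_i$ (guaranteed by the model's continuity/positivity assumptions on $\alpha_i$) are exactly what make every one of these replacements valid, and the final expectation interchanges are justified since all integrands are nonnegative and bounded by $1$ after conditioning, so dominated/monotone convergence applies without further hypotheses.
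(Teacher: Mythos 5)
Your proposal is correct and follows essentially the same route as the paper: a union bound plus the explicit two-bank (pairwise) formula for the first upper bound, monotonicity in the permutation sum for the second, iterative use of the nested constraints $x_m\geq(m-1)\varepsilon$ to bound the permutation integrals by $e^{-A_0((K-1)\varepsilon)}\exp\left(-\sum_{i=1}^{K-1}A_{j_{i+1}}(i\varepsilon)\right)$ for the first lower bound, and the trivial pairwise inclusion for the second; your shortcut of obtaining the pairwise probability from Theorem \ref{Thm Market Failure Probability} with $K=2$ is legitimate and matches the paper's direct computation over the orderings of $\eta_0,\eta_i,\eta_j$. The only blemish is the off-by-one in your intermediate chain (you need $x_{K-1}+\varepsilon\geq x_{K-2}+2\varepsilon\geq(K-1)\varepsilon$, not $x_{K-2}+\varepsilon\geq(K-1)\varepsilon$), which does not affect the final bound.
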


\begin{proof}
Note that by construction $\eta_{0},\eta_{1},\eta_{2},\dots,\eta_{K}$
are mutually independent. 

For the first upper bound, note the following: 
\begin{align}\label{Equation.Star}
\tilde{\mathbb{P}} & \left(\left|\tau_{i}-\tau_{j}\right|<\varepsilon\;for\:some\:(i,j)\in\left(1,...,K\right)\times\left(1,...,K\right),i\neq j\right) \nonumber \\
 & =\tilde{\mathbb{P}}\left(\bigcup_{i=1}^{K}\bigcup_{j>i}|\tau_{i}-\tau_{j}|\leq\varepsilon\right)\leq\sum\limits _{i=1}^{K}\sum\limits _{j>i}\tilde{\mathbb{P}}\left(|\tau_{i}-\tau_{j}|\leq\varepsilon\right) .
\end{align}
For any $i,j$, we have: 
\[
\tilde{\mathbb{P}}\left(|\tau_{i}-\tau_{j}|\leq\varepsilon\right)=1-\tilde{\mathbb{P}}\left(\tau_{i}-\tau_{j}>\varepsilon,\tau_{i}>\tau_{j}\right)-\tilde{\mathbb{P}}\left(\tau_{j}-\tau_{i}>\varepsilon,\tau_{j}>\tau_{i}\right).
\]
Let us focus on $\tilde{\mathbb{P}}\left(\tau_{i}-\tau_{j}>\varepsilon,\tau_{i}>\tau_{j}\right)$
and note that in the next equality we do not consider the event $\{\eta_{i}>\eta_{j}>\eta_{0}\}$
because it implies $\tau_{i}=\tau_{j}$ and so it is impossible to
have $\tau_{i}-\tau_{j}>\varepsilon$ 
\begin{align*}
    \tilde{\mathbb{P}} & \left(\tau_{i}-\tau_{j}>\varepsilon,\tau_{i}>\tau_{j}\right)=\tilde{\mathbb{P}}\left(\tau_{i}-\tau_{j}>\varepsilon,\eta_{0}>\eta_{i}>\eta_{j}\right)\\
    & +\tilde{\mathbb{P}}\left(\tau_{i}-\tau_{j}>\varepsilon,\eta_{i}>\eta_{0}>\eta_{j}\right) =\int\limits _{0}^{\infty}\int\limits _{y+\varepsilon}^{\infty}\int\limits _{x}^{\infty}f_{0}(z)f_{i}(x)f_{j}(y)\,dz\,dx\,dy\\
    & +\int\limits _{0}^{\infty}\int\limits _{y+\varepsilon}^{\infty}\int\limits _{z}^{\infty}f_{i}(x)f_{0}(z)f_{j}(y)\,dx\,dz\,dy =\int\limits _{0}^{\infty}f_{j}(y)\int\limits _{y+\varepsilon}^{\infty}\alpha_{i}(X_{x})e^{-A_{i}(x)-A_{0}(x)}\,dx\,dy\\
    & +\int\limits _{0}^{\infty}f_{j}(y)\int\limits_{y+\varepsilon}^{\infty}\alpha_{0}(X_{z})e^{-A_{i}(z)-A_{0}(z)}\,dz\,dy\\
    & =\int\limits _{0}^{\infty}f_{j}(y)\int\limits _{y+\varepsilon}^{\infty}\left[\alpha_{i}(X_{x})+\alpha_{0}(X_{x})\right]e^{-A_{i}(x)-A_{0}(x)}\,dx\,dy \\
    & =\int\limits _{0}^{\infty}\alpha_{j}(y)e^{-A_{j}(y)-A_{i}(y+\varepsilon)-A_{0}(y+\varepsilon)}\,dy.
\end{align*}
$\tilde{\mathbb{P}}\left(\tau_{i}-\tau_{j}>\varepsilon,\tau_{j}>\tau_{i}\right)$
follows by an analogous argument: 
\[
\tilde{\mathbb{P}}\left(\tau_{i}-\tau_{j}>\varepsilon,\tau_{j}>\tau_{i}\right)=\int\limits _{0}^{\infty}\alpha_{i}(y)e^{-A_{i}(y)-A_{j}(y+\varepsilon)-A_{0}(y+\varepsilon)}\,dy.
\]
Joining these 2 probabilities, we get: 
\begin{multline*}
    \tilde{\mathbb{P}}\left(|\tau_{i}-\tau_{j}|\leq\varepsilon\right)=1-\int\limits _{0}^{\infty}\alpha_{j}(y)e^{-A_{j}(y)-A_{i}(y+\varepsilon)-A_{0}(y+\varepsilon)}dy\\
    -\int\limits _{0}^{\infty}\alpha_{i}(y)e^{-A_{i}(y)-A_{j}(y+\varepsilon)-A_{0}(y+\varepsilon)}dy.
\end{multline*}
Using this along with \eqref{Equation.Star}, we get: 
\begin{multline*}
    \tilde{\mathbb{P}} \left(\left|\tau_{i}-\tau_{j}\right|<\varepsilon\;for\:some\:(i,j)\in\left(1,...,K\right)\times\left(1,...,K\right),i\neq j\right)\\
    \leq\sum\limits _{i=1}^{K}\sum\limits _{j>i}\left[1-\int\limits _{0}^{\infty}\alpha_{j}(y)e^{-A_{j}(y)-A_{i}(y+\varepsilon)-A_{0}(y+\varepsilon)}dy \right.\\
    \left.-\int\limits _{0}^{\infty}\alpha_{i}(y)e^{-A_{i}(y)-A_{j}(y+\varepsilon)-A_{0}(y+\varepsilon)}dy\right]\\
    =\binom{K}{2}-\sum\limits _{i=1}^{K}\sum\limits _{j\neq i}^{K}\int\limits _{0}^{\infty}\alpha_{i}(y)e^{-A_{i}(y)-A_{j}(y+\varepsilon)-A_{0}(y+\varepsilon)}dy.
\end{multline*}

Finally, the desired upper bound follows by taking an additional expectation.

The second upper bound is natural as we are taking a sum over a smaller set.

For the first lower bound, it suffices to bound
\begin{equation*}
	\mathbb{P}\left(|\tau_{i}-\tau_{j}|>\varepsilon\text{ for all }(i,j), i \neq j \right)
\end{equation*}
from above. This is because: 
\begin{align*}
\mathbb{P}\left(\left|\tau_{i}-\tau_{j}\right|<\varepsilon\text{ for some }(i,j), i\neq j \right)=1-\mathbb{P}\left(|\tau_{i}-\tau_{j}|>\varepsilon\text{ for all }(i,j), i\neq j \right).
\end{align*}
From Theorem \ref{Thm Market Failure Probability}, we know that,
\begin{multline*}
    \tilde{\mathbb{P}}\left(|\tau_{i}-\tau_{j}|>\varepsilon\text{ for all }(i,j), i\neq j \right)=\sum_{j\in P_K}\int\limits _{0}^{\infty}f_{j_{1}}(x_{1})\int\limits _{x_{1}+\varepsilon}^{\infty}f_{j_{2}}(x_{2})\\
    \int\limits _{x_{2}+\varepsilon}^{\infty}f_{j_{3}}(x_{3}) \dots \int\limits _{x_{K-3}+\varepsilon}^{\infty}f_{j_{K-2}}(x_{K-2}) \int\limits _{x_{K-2}+\varepsilon}^{\infty}f_{j_{K-1}}(x_{K-1})\exp\left[-A_{j_{K}}(x_{K-1}+\varepsilon)\right.\\
    \left.-A_{0}(x_{K-1}+\varepsilon)\right]\,dx_{K-1}\,dx_{K-2}\dots\,dx_{3}\,dx_{2}\,dx_{1}.
\end{multline*}
As $x_{K-1}\geq x_{K-2}+\varepsilon$ in the innermost integral in
one of the terms of the previous sum, then $e^{-\left(A_{j_{K}}+A_{0}\right)(x_{K-1}+\varepsilon)}\leq e^{-\left(A_{j_{K}}+A_{0}\right)(x_{K-2}+2\varepsilon)}$
and hence: 
\begin{multline*}
    \int\limits _{0}^{\infty}f_{j_{1}}(x_{1})\int\limits _{x_{1}+\varepsilon}^{\infty}f_{j_{2}}(x_{2})\dots\\
    \dots \int\limits _{x_{K-2}+\varepsilon}^{\infty}f_{j_{K-1}}(x_{K-1})e^{-\left(A_{j_{K}}+A_{0}\right)(x_{K-1}+\varepsilon)} dx_{K-1}\dots\,dx_{1}\\
    \leq\int\limits _{0}^{\infty}f_{j_{1}}(x_{1})\int\limits _{x_{1}+\varepsilon}^{\infty}f_{j_{2}}(x_{2})\dots e^{-\left(A_{j_{K}}+A_{0}\right)(x_{K-2}+2\varepsilon)}\\
    \int\limits _{x_{K-2}+\varepsilon}^{\infty}f_{j_{K-1}}(x_{K-1})\,dx_{K-1}\dots\,dx_{2}\,dx_{1}\\
    =\int\limits _{0}^{\infty}f_{j_{1}}(x_{1})\int\limits _{x_{1}+\varepsilon}^{\infty}f_{j_{2}}(x_{2})\dots\\
    \dots \int\limits _{x_{K-3}+\varepsilon}^{\infty}f_{j_{K-2}}(x_{K-2})e^{-\left(A_{j_{K}}+A_{0}\right)(x_{K-2}+2\varepsilon)-A_{j_{K-1}}(x_{K-2}+\varepsilon)}\,dx_{K-2}\dots\,dx_{2}\,dx_{1}.
\end{multline*}
As $x_{K-2}\geq x_{K-3}+\varepsilon$ in the innermost integral, we
have
\begin{equation*}
    e^{-\left(A_{j_{K}}+A_{0}\right)(x_{K-2}+2\varepsilon)-A_{j_{K-1}}(x_{K-2}+\varepsilon)}\leq e^{-\left(A_{j_{K}}+A_{0}\right)(x_{K-3}+3\varepsilon)-A_{j_{K-1}}(x_{K-3}+2\varepsilon)}.
\end{equation*}
Then, we solve the integral with respect to $x_{K-2}$. We keep going
in a similar fashion to conclude that: 
\begin{multline*}
     \int\limits _{0}^{\infty}f_{j_{1}}(x_{1})\int\limits _{x_{1}+\varepsilon}^{\infty}f_{j_{2}}(x_{2})\dots\\
     \dots \int\limits _{x_{K-2}+\varepsilon}^{\infty}f_{j_{K-1}}(x_{K-1})e^{-\left(A_{j_{K}}+A_{0}\right)(x_{K-1}+\varepsilon)}\,dx_{K-1}\dots\,dx_{2}\,dx_{1}\\
    \leq \exp\left[-\left(A_{j_{K}}+A_{0}\right)\left((K-1)\varepsilon\right)-A_{j_{K-1}}\left((K-2)\varepsilon\right)-\dots \right.\\
    \left. \dots -A_{j_{3}}(2\varepsilon)-A_{j_{2}}(\varepsilon)\right]= \exp\left[-A_{0}\left((K-1)\varepsilon\right)-\sum\limits _{i=1}^{K-1}A_{j_{i+1}}\left(i\varepsilon\right)\right].
\end{multline*}
Then we take the sum over all possible permutations $P_K$ to get:
\begin{multline*}
    \tilde{\mathbb{P}}\left(|\tau_{i}-\tau_{j}|>\varepsilon\text{ for all }(i,j), i\neq j \right)\leq \\
    \sum_{j\in P_K}\exp\left[-A_{0}\left((K-1)\varepsilon\right)-\sum\limits _{i=1}^{K-1}A_{j_{i+1}}\left(i\varepsilon\right)\right].
\end{multline*}
Recall that the sub index $j$ in the sum stands for any of the possible
permutations. This is $j=\left(j_{1}=\sigma(1),j_{2}=\sigma(2),\dots,j_{K}=\sigma(K)\right)$
Using the complement of the previous probability, we have 
\begin{multline*}
\tilde{\mathbb{P}}\left(\left|\tau_{i}-\tau_{j}\right|<\varepsilon\text{ for some }(i,j), i\neq j\right)\geq \\
1-e^{-A_{0}\left((K-1)\varepsilon\right)}\sum\limits _{j\in P_K}\exp\left(-\sum\limits _{i=1}^{K-1}A_{j_{i+1}}\left(i\varepsilon\right)\right).
\end{multline*}
The desired lower bound follows by taking an additional expectation. 

For the second lower bound, note that:
	\begin{multline} \label{Lower.Bound}
		\mathbb{P}\left(|\tau_{i}-\tau_{j}|>\varepsilon\text{ for all }(i,j), i\neq j \right) = \\
		\mathbb{P}\left( \bigcap\limits_{i=1}^K \bigcap\limits_{j \neq i}^K |\tau_{i}-\tau_{j}|>\varepsilon  \right) \leq \min\limits_{i \neq j} \mathbb{P}\left(  |\tau_{i}-\tau_{j}|>\varepsilon  \right)\\
		= \min\limits_{i \neq j} \left[ \int_0^\infty \alpha_i(x) e^{-A_i(x)-(A_j+A_0)(x+\varepsilon)}dx +  \int_0^\infty \alpha_j(x) e^{-A_j(x)-(A_i+A_0)(x+\varepsilon)}dx \right].
	\end{multline}
	And then, use the following:
	\begin{equation*}
		\mathbb{P}\left(\left|\tau_{i}-\tau_{j}\right|<\varepsilon\text{ for some }(i,j), i\neq j \right)=1-\mathbb{P}\left(|\tau_{i}-\tau_{j}|>\varepsilon\text{ for all }(i,j), i\neq j\right).
	\end{equation*}
\end{proof}
\begin{cor}
(\emph{Constant Default Intensities}) \par
If $\alpha_i(X_t) = \alpha_i$ for all $t\geq 0$, all $i=1, 2, \dots, K$, and where $\alpha_i\in \mathbb{R}^+$, then:
\begin{multline}
    \mathbb{P}\left(\left|\tau_{i}-\tau_{j}\right|<\varepsilon\text{ for some }(i,j), i\neq j\right)\leq \\ 
    \min \left[ \binom{K}{2}-e^{-\alpha_{o}\varepsilon}\sum\limits _{i=1}^{K}\alpha_{i}\sum\limits _{j\neq i}^{K}e^{\alpha_{j}\varepsilon}\left(\frac{1}{\alpha_{i}+\alpha_{j}+\alpha_{0}}\right), \right. \\
    \left. 	1-\sum_{j\in S\left( P_K \right)}\prod\limits _{i=1}^{K-1}\left(\frac{\alpha_{j_{i}}}{\alpha_{0}+\sum\limits _{k=i}^{K}\alpha_{j_{k}}}e^{-\varepsilon\left(\alpha_{0}+\sum\limits _{k=i+1}^{K}\alpha_{j_{k}}\right)}\right) \right].
\end{multline}
\begin{multline}
    \mathbb{P}\left(\left|\tau_{i}-\tau_{j}\right|<\varepsilon\text{ for some }(i,j), i\neq j \right)\geq \\ 
    \max \left[ 1-e^{\alpha_{0}\varepsilon\left(K-1\right)}\sum\limits _{j\in P_K}\exp\left(-\varepsilon\sum\limits _{i=1}^{K-1}i\alpha_{j_{i+1}}\right), \right. \\
     \left. 1- \min_{i \neq j} \left( e^{-\varepsilon \alpha_0} \frac{1}{\alpha_i + \alpha_j + \alpha_0} \left(\alpha_i e^{-\varepsilon \alpha_j} + \alpha_j e^{-\varepsilon \alpha_i}\right)  \right) \right] .
\end{multline}
\end{cor}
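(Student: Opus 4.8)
The plan is to specialize the preceding \textbf{Bounds on the Market Failure Probability} theorem to the constant-intensity case $\alpha_i(X_t)\equiv\alpha_i$, for which $A_i(t)=\alpha_i t$ and $f_i(x)=\alpha_i e^{-\alpha_i x}$, and then evaluate the resulting integrals, which are now elementary. Since every intensity is deterministic, each expectation $\mathbb{E}[\,\cdot\,]$ appearing in that theorem reduces to its integrand, so the whole exercise collapses to computing integrals of the type $\int_0^\infty \lambda e^{-\mu x}\,dx=\lambda/\mu$ and tidying up the algebra.

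First I would treat the first upper bound. Substituting the constant intensities into $\int_0^\infty \alpha_i(X_x)e^{-A_i(x)-A_j(x+\varepsilon)-A_0(x+\varepsilon)}\,dx$ gives $\alpha_i e^{-(\alpha_j+\alpha_0)\varepsilon}\int_0^\infty e^{-(\alpha_i+\alpha_j+\alpha_0)x}\,dx=\dfrac{\alpha_i e^{-(\alpha_j+\alpha_0)\varepsilon}}{\alpha_i+\alpha_j+\alpha_0}$; summing over $i$ and over $j\neq i$, subtracting from $\binom{K}{2}$, and pulling out the common factor $e^{-\alpha_0\varepsilon}$ yields the first entry of the claimed minimum. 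For the second entry, I would note that the nested integral there is \emph{exactly} the one evaluated inside the proof of \eqref{eq: mkt failure prob constant inten}, except that the permutation sum runs over the subset $S(P_K)\subseteq P_K$ rather than over all of $P_K$; hence that computation can be quoted verbatim to give $\sum_{j\in S(P_K)}\prod_{i=1}^{K-1}\bigl(\frac{\alpha_{j_i}}{\alpha_0+\sum_{k=i}^K\alpha_{j_k}}e^{-\varepsilon(\alpha_0+\sum_{k=i+1}^K\alpha_{j_k})}\bigr)$.

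Next I would handle the two lower bounds. For the first, $A_0((K-1)\varepsilon)=\alpha_0(K-1)\varepsilon$ and $A_{j_{i+1}}(i\varepsilon)=i\alpha_{j_{i+1}}\varepsilon$, so $e^{-A_0((K-1)\varepsilon)}\sum_{j\in P_K}\exp(-\sum_{i=1}^{K-1}A_{j_{i+1}}(i\varepsilon))$ becomes $e^{-\alpha_0(K-1)\varepsilon}\sum_{j\in P_K}\exp(-\varepsilon\sum_{i=1}^{K-1}i\alpha_{j_{i+1}})$, and passing to the complement gives the first entry of the maximum. For the second, the two integrals inside $\min_{i\neq j}$ are of the same form as before, namely $\int_0^\infty \alpha_i(x)e^{-A_i(x)-(A_j+A_0)(x+\varepsilon)}\,dx=\dfrac{\alpha_i e^{-(\alpha_j+\alpha_0)\varepsilon}}{\alpha_i+\alpha_j+\alpha_0}$ together with the $i\leftrightarrow j$ symmetric one; adding them and factoring out $e^{-\alpha_0\varepsilon}/(\alpha_i+\alpha_j+\alpha_0)$ produces $e^{-\varepsilon\alpha_0}\frac{1}{\alpha_i+\alpha_j+\alpha_0}(\alpha_i e^{-\varepsilon\alpha_j}+\alpha_j e^{-\varepsilon\alpha_i})$, the second entry of the maximum.

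The only real care needed — and the closest thing to an obstacle here — is bookkeeping: lining up the index ranges in the permutation sums, keeping track of which exponent carries the $+\varepsilon$ shift, and noting that $\binom{K}{2}-\sum_i\sum_{j\neq i}(\cdot)$ is written as an \emph{ordered} double sum over all pairs $(i,j)$, $i\neq j$, so that it matches the form in the general theorem. Beyond that, no new idea is required; everything follows by direct evaluation of exponential integrals, exactly as in the proof of \eqref{eq: mkt failure prob constant inten}.
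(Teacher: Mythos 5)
Your proposal is correct and is exactly the route the paper intends: the corollary is stated without a separate proof precisely because it is the constant-intensity specialization of the bounds theorem, with every expectation collapsing and every integral reducing to $\int_0^\infty \lambda e^{-\mu x}\,dx=\lambda/\mu$, just as you compute (including quoting the permutation-product evaluation from the constant-intensity market-failure formula for the $S(P_K)$ term). Note only that your (correct) evaluations yield $e^{-\alpha_j\varepsilon}$ and $e^{-\alpha_0\varepsilon(K-1)}$ where the paper's displayed corollary prints $e^{\alpha_j\varepsilon}$ and $e^{\alpha_0\varepsilon(K-1)}$; these are sign typos in the statement, not gaps in your argument.
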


\section{Comparative Statics}

This section explores how the market failure probability changes when
the initial conditions in the economy change. We consider both changing
the number of G-SIBs and the initial state of the economy.

\subsection{The Number of G-SIBs}

For regulatory purposes, it is important to understand how the probability
of a market failure changes with the inclusion of another G-SIB. This
relates to macro-prudential policy regarding whether the number of
banks in the economy being ``too large to fail'' or designated as
G-SIBs should be reduced (by breaking them up into smaller institutions)
to decrease the market failure probability (see \cite{Berndt.Duffie.Zhu.2021}, 
\cite{Schich.Toader.2017} for issues related to G-SIBs designation).

Computing the market failure probability for $K$ banks versus $K+1$
banks in Theorem \ref{Thm Market Failure Probability} and taking
the difference yields the marginal impact of adding another G-SIB
to the economy. It is easy to show that this probability increases
as more G-SIBs enter the market.
\begin{thm}
(Increasing the Number of G-SIBs)
\begin{multline}
    \mathbb{P}\left(\left|\tau_{i}-\tau_{j}\right|<\varepsilon\;for\:some\:(i,j)\in\left(1,\dots,K\right)\times\left(1,\dots,K\right),i\neq j\right)<\\
    \mathbb{P}\left(\left|\tau_{i}-\tau_{j}\right|<\varepsilon\;for\:some\:(i,j)\in\left(1,\dots,K+1\right)\times\left(1, \dots,K+1\right),i\neq j\right) \label{eq: result-1}
\end{multline}
provided that $\alpha_{i}(\cdot)$ for $i=0,1,2,\dots,K$ and the
underlying process, i.e., $(X_{t})_{t\geq0}$ remain fixed. 
\end{thm}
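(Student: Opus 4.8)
The plan is to recognize that \eqref{eq: result-1} is a weak inequality for essentially trivial reasons and that the real content is the strictness. Write $E_K$ for the event in the first line of \eqref{eq: result-1} and $E_{K+1}$ for the event in the second line. Since the intensities $\alpha_0,\alpha_1,\dots,\alpha_K$ and the driving process $X$ are held fixed, the random variables $\eta_0,\eta_1,\dots,\eta_K$, and hence $\tau_i=\min(\eta_0,\eta_i)$ for $i\le K$, are unchanged when bank $K+1$ is added; adding bank $K+1$ merely enlarges the index set of admissible pairs. Therefore $E_K\subseteq E_{K+1}$ and
\[
\mathbb{P}(E_{K+1})-\mathbb{P}(E_K)=\mathbb{P}\!\left(E_{K+1}\cap E_K^{\,c}\right),
\]
so it suffices to produce a measurable subset of $E_{K+1}\cap E_K^{\,c}$ of strictly positive probability.

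I would build such a subset explicitly after conditioning on $\mathcal{F}^{X}_\infty$ and using the mutual conditional independence of $\eta_0,\eta_1,\dots,\eta_{K+1}$. Fix $\delta\in(0,\varepsilon)$ and let $G$ be the event on which $\eta_1\in(0,\delta)$, $\eta_{K+1}\in(0,\delta)$, $\eta_k\in\bigl(2(k-1)\varepsilon,\,2(k-1)\varepsilon+\delta\bigr)$ for $k=2,\dots,K$, and $\eta_0>2(K-1)\varepsilon+\delta$. On $G$ the variable $\eta_0$ dominates every $\eta_i$, so $\tau_i=\eta_i$ for all $i$ (in particular no $\tau_i$ with $i\le K$ is replaced by $\eta_0$, so there are no ties among $\tau_1,\dots,\tau_K$); the consecutive gaps among $\eta_1,\dots,\eta_K$ lie in $(2\varepsilon-\delta,2\varepsilon+\delta)\subseteq(\varepsilon,\infty)$, hence every pairwise distance among $\tau_1,\dots,\tau_K$ exceeds $\varepsilon$ and $G\subseteq E_K^{\,c}$; at the same time $|\tau_{K+1}-\tau_1|<\delta<\varepsilon$, so $G\subseteq E_{K+1}$.

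Finally, because each $\alpha_i$ is strictly positive and continuous and $X$ is right-continuous with left limits, $A_i$ is continuous (as already noted in Section 2) and the conditional density $f_i(x)=\alpha_i(X_x)e^{-A_i(x)}$ is strictly positive for every $x\ge0$, a.s.; likewise $\tilde{\mathbb{P}}(\eta_0>c)=e^{-A_0(c)}>0$ for every finite $c$. Since each interval above has positive length, conditional independence gives $\tilde{\mathbb{P}}(G)>0$ a.s., whence $\mathbb{P}(G)=\mathbb{E}[\tilde{\mathbb{P}}(G)]>0$ and therefore $\mathbb{P}(E_{K+1})-\mathbb{P}(E_K)\ge\mathbb{P}(G)>0$. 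I expect the only delicate point to be the bookkeeping of the interval placement — simultaneously keeping $\eta_1,\dots,\eta_K$ pairwise $\varepsilon$-separated, forcing $\eta_{K+1}$ to collide with $\eta_1$, and pushing $\eta_0$ above everything so that no $\tau_i$ collapses onto $\eta_0$ — after which the conclusion reduces to the positivity of the conditional densities $f_i$.
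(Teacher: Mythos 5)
Your proposal is correct, and it is worth comparing with the paper's own argument. The paper proves this theorem in one line: with $A$ the failure event for $K$ banks and $B$ the failure event for $K+1$ banks, it decomposes $B=A\cup\{|\tau_{K+1}-\tau_j|<\varepsilon \text{ for some } j\le K\}$, asserts $A\subsetneq B$, and concludes. Strictly speaking, a proper inclusion of events only yields the weak inequality $\mathbb{P}(A)\le\mathbb{P}(B)$; the strict inequality claimed in the theorem additionally requires $\mathbb{P}(B\setminus A)>0$, which the paper leaves implicit. You start from the same monotonicity observation ($E_K\subseteq E_{K+1}$ because $\eta_0,\eta_1,\dots,\eta_K$, hence $\tau_1,\dots,\tau_K$, are unchanged when the $(K+1)$st bank is added), but you then supply exactly the missing step: an explicit configuration in which $\eta_{K+1}$ collides with $\eta_1$ within $\delta<\varepsilon$, the remaining $\eta_k$ are spread out at spacing $2\varepsilon$ so that all pairwise gaps among $\tau_1,\dots,\tau_K$ exceed $\varepsilon$, and $\eta_0$ exceeds everything so that no $\tau_i$ collapses onto $\eta_0$; conditional independence given $\mathcal{F}^X_\infty$ together with the strict positivity of the conditional densities $f_i(x)=\alpha_i(X_x)e^{-A_i(x)}$ and of $e^{-A_0(c)}$ then gives $\tilde{\mathbb{P}}(G)>0$ a.s., hence $\mathbb{P}(E_{K+1})-\mathbb{P}(E_K)\ge\mathbb{P}(G)>0$. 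Your interval bookkeeping checks out (consecutive gaps of at least $2\varepsilon-\delta>\varepsilon$, and $\eta_0$ above $2(K-1)\varepsilon+\delta$ dominates $\eta_{K+1}<\delta$ as well), so your argument is sound and in fact more complete than the published proof, at the modest cost of the explicit construction.
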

\begin{proof}
Let
\begin{align}
    A & :=\{\left|\tau_{i}-\tau_{j}\right|<\varepsilon\;for\:some\:(i,j)\in\left(1,...,K\right)^2,i\neq j\}\\
    B & :=\{\left|\tau_{i}-\tau_{j}\right|<\varepsilon\;for\:some\:(i,j)\in\left(1,\dots,K,K+1\right)^2,i\neq j\}.
\end{align}
Note that the event $B$ can be decomposed in the following way: 
\begin{equation}
B=A\cup\{\left|\tau_{K+1}-\tau_{j}\right|<\varepsilon\;for\:some\:j\in\left(1,...,K\right)\}.
\end{equation}
It is clear that $A \subsetneq B$ and hence the result follows .
\end{proof}
This result suggests that as the number of G-SIBs goes to infinity,
the market failure probability converges to one, as the following
theorem documents.
\begin{thm}
(Limit as $K\rightarrow\infty$)
\begin{equation}
\lim_{K\rightarrow\infty}\mathbb{P}\left(\left|\tau_{i}-\tau_{j}\right|<\varepsilon\;for\:some\:(i,j)\in\left(1,...,K\right)\times\left(1,...,K\right),i\neq j\right)=1.
\end{equation}
\end{thm}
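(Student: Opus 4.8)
The plan is to show the failure probability tends to $1$ by producing, for each $K$, an explicit lower bound that itself tends to $1$. The natural candidate is the first lower bound from the ``Bounds on the Market Failure Probability'' theorem, namely
\[
\mathbb{P}\left(\left|\tau_{i}-\tau_{j}\right|<\varepsilon\text{ for some }(i,j),\, i\neq j\right)\geq 1-\mathbb{E}\left(e^{-A_{0}\left((K-1)\varepsilon\right)}\sum_{j\in P_K}\exp\left(-\sum_{i=1}^{K-1}A_{j_{i+1}}(i\varepsilon)\right)\right),
\]
so it suffices to prove that the expectation on the right goes to $0$ as $K\to\infty$. Since $e^{-A_{0}((K-1)\varepsilon)}\le 1$ and each integrand is nonnegative and bounded by $1$, by dominated convergence it is enough to show the sum inside tends to $0$ pointwise (i.e.\ $\tilde{\mathbb{P}}$-a.s.\ in the state path). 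The sum has $K!$ terms, so the obstacle is that the number of summands blows up; I need the individual summands to decay fast enough to beat $K!$.

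First I would bound a single summand. Fix a permutation $j$ and note $\sum_{i=1}^{K-1}A_{j_{i+1}}(i\varepsilon)\ge \sum_{i=1}^{K-1} A_{j_{i+1}}(\varepsilon)$ is not quite enough; instead I use that $A_\ell$ is increasing, so $A_{j_{i+1}}(i\varepsilon)\ge A_{j_{i+1}}(\varepsilon)$, giving the crude bound $\exp(-\sum_{i=1}^{K-1}A_{j_{i+1}}(i\varepsilon))\le \exp(-\sum_{i=1}^{K-1}A_{j_{i+1}}(\varepsilon))$. Summing over permutations, each of the indices $1,\dots,K$ appears as some $j_{i+1}$ in exactly $(K-1)!$ of the $K!$ permutations (once we drop $j_1$), which is still not tight enough on its own. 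The cleaner route: keep the $i$-dependence. For a fixed permutation, $\sum_{i=1}^{K-1}A_{j_{i+1}}(i\varepsilon)\ge A_{j_{K}}((K-1)\varepsilon)\ge \alpha_{\min}(K-1)\varepsilon$ in the constant-intensity regime, but in general I only know $A_\ell(t)\to\infty$. So I would instead argue as follows: since $\alpha_\ell$ is continuous and strictly positive, on any sample path $A_\ell(t)\ge c_\ell(t)$ for an increasing function; more robustly, write $\sum_{i=1}^{K-1}A_{j_{i+1}}(i\varepsilon)\ge \sum_{i=\lceil K/2\rceil}^{K-1}A_{j_{i+1}}(i\varepsilon)\ge \sum_{i=\lceil K/2\rceil}^{K-1}A_{j_{i+1}}(\varepsilon)$, i.e.\ at least $\lfloor K/2\rfloor$ of the terms $A_\ell(\varepsilon)$ with distinct indices $\ell$.

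The key estimate is then: for each permutation, $\exp(-\sum_{i=1}^{K-1}A_{j_{i+1}}(i\varepsilon))\le \prod_{i=1}^{K-1} e^{-A_{j_{i+1}}(\varepsilon)}$, and summing over all $j\in P_K$ the total is at most $K\cdot \big(\sum_{\ell=1}^{K} e^{-A_\ell(\varepsilon)}\big)^{K-1}/(\text{something})$ — more precisely, expanding, $\sum_{j\in P_K}\prod_{i=1}^{K-1}e^{-A_{j_{i+1}}(\varepsilon)} = \sum_{\ell=1}^K \big(\sum_{m\neq \ell}e^{-A_m(\varepsilon)}\big)\cdots$, which is bounded by $K\cdot e_{K-1}$ where $e_{K-1}$ is the $(K-1)$-st elementary symmetric polynomial in $(e^{-A_1(\varepsilon)},\dots,e^{-A_K(\varepsilon)})$; in the i.i.d.-given-$\mathcal{F}^X$ case with $A_\ell(\varepsilon)\equiv A(\varepsilon)$ this is $K\binom{K}{K-1}e^{-(K-1)A(\varepsilon)} = K^2 e^{-(K-1)A(\varepsilon)}\to 0$. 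I expect the paper treats the general case by imposing (as in the destructive-competition remark) that the intensities are bounded below, or identically distributed, so that $\sup_\ell e^{-A_\ell(\varepsilon)}\le \rho<1$ uniformly, whence $\sum_{j\in P_K}\prod_{i=1}^{K-1}e^{-A_{j_{i+1}}(\varepsilon)}\le K!\,\rho^{K-1}\big/( \cdot)$; but $K!\rho^{K-1}\not\to 0$, so the honest argument must exploit the $i\varepsilon$ growth, giving $\sum_{i=1}^{K-1}A_{j_{i+1}}(i\varepsilon)\ge \alpha_{\min}\varepsilon\sum_{i=1}^{K-1}i = \alpha_{\min}\varepsilon K(K-1)/2$, so each summand is $\le e^{-\alpha_{\min}\varepsilon K(K-1)/2}$ and $K!\,e^{-\alpha_{\min}\varepsilon K(K-1)/2}\to 0$ since the exponential in $K^2$ dominates $K!\sim e^{K\ln K}$. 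This quadratic-in-$K$ decay beating $K!$ is the heart of the matter, and pushing it through under the paper's stated hypotheses (rather than an ad hoc lower bound on the intensities) is the step I would be most careful about. Once the sum $\to 0$ pointwise and is dominated by $1$, dominated convergence finishes the proof.
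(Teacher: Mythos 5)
Your route has a genuine gap, and it is exactly at the step you flag as delicate. You want to conclude from the first lower bound of the Bounds theorem, i.e. to show that $\mathbb{E}\bigl[e^{-A_{0}((K-1)\varepsilon)}\sum_{j\in P_K}\exp\bigl(-\sum_{i=1}^{K-1}A_{j_{i+1}}(i\varepsilon)\bigr)\bigr]\to 0$, but (i) the domination step is unjustified --- that quantity is an upper bound on a conditional probability, not a probability itself, and with $K!$ summands it need not be $\leq 1$, so ``bounded by $1$, apply dominated convergence'' does not stand as written; and (ii), more seriously, the pointwise convergence can fail under the paper's hypotheses, so no fix of the domination issue can save this particular bound. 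Concretely, the paper only assumes each $\alpha_i$ is positive with $A_i(\infty)=\infty$; it does not give you the uniform lower bound $\alpha_{\min}>0$ that your final estimate $\sum_{i=1}^{K-1}A_{j_{i+1}}(i\varepsilon)\geq \alpha_{\min}\varepsilon K(K-1)/2$ requires. Take, for instance, identically distributed banks with $A_i(t)=\log(1+t)$ for all $i$ (realizable with a positive continuous intensity along an admissible state path, and $A_i(\infty)=\infty$) and $A_0(t)=\log(1+t)$. Then the bound equals $\frac{1}{1+(K-1)\varepsilon}\,K!\prod_{i=1}^{K-1}(1+i\varepsilon)^{-1}=\frac{1}{1+(K-1)\varepsilon}\prod_{i=1}^{K-1}\frac{i+1}{1+i\varepsilon}$, which grows like $\varepsilon^{-(K-1)}$ for $\varepsilon<1$: the lower bound $1-\mathbb{E}(\cdots)$ becomes vacuous even though the theorem's conclusion still holds. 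So the chosen bound is simply too weak in the paper's generality; adding a uniform intensity floor would make your $K!\,e^{-\alpha_{\min}\varepsilon K(K-1)/2}\to 0$ argument work, but that proves a strictly less general statement.

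The paper's proof needs none of this bookkeeping because it never touches the permutation sum: conditionally on $\mathcal{F}^{X}_{\infty}$, on the event that all pairwise gaps are at least $\varepsilon$, the market-wide time $\eta_{0}$ cannot fall below $\eta_{(K-1)}$ (that would force two $\tau$'s to coincide at $\eta_{0}$), and in the remaining cases the $\varepsilon$-separation forces $\tau_{(K)}\geq\tau_{(1)}+(K-1)\varepsilon$ while $\tau_{(K)}\leq\eta_{0}$, hence $\eta_{0}\geq\eta_{(1)}+(K-1)\varepsilon$. The conditional probability of that event is at most $e^{-A_{0}((K-1)\varepsilon)}\to 0$ because $\eta_{0}$ is conditionally a.s.\ finite, and bounded convergence (the conditional probability is $\leq 1$) finishes the proof. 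In other words, the only structural input needed is the finiteness of $\eta_{0}$ and the fact that a common shock produces simultaneous defaults --- not any quantitative decay of the idiosyncratic intensities --- which is precisely what your approach cannot extract from the permutation-sum bound.
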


\begin{proof}
Similar to the proof of Theorem 3, for $k=1,2,\dots,K$, let $\tau_{(k)}$
and $\eta_{(k)}$ be the $k^{th}$ order statistic of $\left(\tau_{1},\tau_{2},\dots,\tau_{K}\right)$
and $\left(\eta_{1},\eta_{2},\dots,\eta_{K}\right)$ respectively.
For example, $\tau_{(1)}=\min\left(\tau_{1},\tau_{2},\dots,\tau_{K}\right)$,
$\eta_{(1)}=\min\left(\eta_{1},\eta_{2},\dots,\eta_{K}\right)$, $\tau_{(K)}=\max\left(\tau_{1},\tau_{2},\dots,\tau_{K}\right)$
and $\eta_{(K)}=\max\left(\eta_{1},\eta_{2},\dots,\eta_{K}\right)$.

The complement of the event $\left\{ \left|\tau_{i}-\tau_{j}\right|<\varepsilon\;\mathrm{for\:some}\:(i,j),\:i\neq j\right\} $ is equal to the event $\left\{ \left|\tau_{i}-\tau_{j}\right|\geq\varepsilon\;\mathrm{for\:all}\:(i,j),\:i\neq j\right\} $. Then, we get: 
\begin{multline}
    \tilde{\mathbb{P}}\left(\left|\tau_{i}-\tau_{j}\right|\geq\varepsilon\;\mathrm{for\:all}\:(i,j),\:i\neq j\right)= \\
    \tilde{\mathbb{P}}\left(\left|\tau_{i}-\tau_{j}\right|\geq\varepsilon\;\mathrm{for\:all}\:(i,j),\:i\neq j,\eta_{0}\geq\eta_{(K)}\right)\\
    +\tilde{\mathbb{P}}\left(\left|\tau_{i}-\tau_{j}\right|\geq\varepsilon\;\mathrm{for\:all}\:(i,j),\:i\neq j,\eta_{(k-1)}\leq\eta_{0}<\eta_{(K)}\right)\\
    +\tilde{\mathbb{P}}\left(\left|\tau_{i}-\tau_{j}\right|\geq\varepsilon\;\mathrm{for\:all}\:(i,j),\:i\neq j,\eta_{0}<\eta_{(K-1)}\right). \label{Proof.Prop.1}
\end{multline}
If $\eta_{0}<\eta_{(K-1)}$, then at least there exists one pair of
$(i,j)$, $i\neq j$ such that $\tau_{i}=\tau_{j}=\eta_{0}$ and hence,
$\tilde{\mathbb{P}}\left(\left|\tau_{i}-\tau_{j}\right|\geq\varepsilon\;\mathrm{for\:all}\:(i,j),\:i\neq j,\eta_{0}<\eta_{(K-1)}\right)=0$.

If $\left|\tau_{i}-\tau_{j}\right|\geq\varepsilon\;\mathrm{for\:all}\:(i,j),\:i\neq j$,
then $\tau_{(2)}\geq\tau_{(1)}+\varepsilon$, $\tau_{(3)}\geq\tau_{(2)}+\varepsilon,\dots,\tau_{(K)}\geq\tau_{(K-1)}+\varepsilon$,
which implies $\tau_{(K)}\geq\tau_{(1)}+\left(K-1\right)\varepsilon$.
Moreover, on the event $\{\eta_{0}\geq\eta_{(K)}\}$, we have that
$\tau_{(K)}=\eta_{(K)}$ and $\tau_{(1)}=\eta_{(1)}$. Hence, 
\begin{multline}
    \tilde{\mathbb{P}}\left(\left|\tau_{i}-\tau_{j}\right|\geq\varepsilon\;\mathrm{for\:all}\:(i,j),\:i\neq j,\eta_{0}\geq\eta_{(K)}\right) \\
    \leq \tilde{\mathbb{P}}\left(\tau_{(K)}\geq\tau_{(1)}+\left(K-1\right)\varepsilon,\eta_{0}\geq\eta_{(K)}\right) \\
    =\tilde{\mathbb{P}}\left(\eta_{(K)}\geq\eta_{(1)}+\left(K-1\right)\varepsilon,\eta_{0}\geq\eta_{(K)}\right) \\
    =\tilde{\mathbb{P}}\left(\eta_{(1)}+\left(K-1\right)\varepsilon\leq\eta_{(K)}\leq\eta_{0}\right) \\
    \leq\tilde{\mathbb{P}}\left(\eta_{(1)}+\left(K-1\right)\varepsilon\leq\eta_{0}\right).
\end{multline}
\begin{multline}
    \tilde{\mathbb{P}}\left(\left|\tau_{i}-\tau_{j}\right|\geq\varepsilon\;\mathrm{for\:all}\:(i,j),\:i\neq j,\eta_{0}\geq\eta_{(K)}\right) \\
    \leq \tilde{\mathbb{P}}\left(\tau_{(K)}\geq\tau_{(1)}+\left(K-1\right)\varepsilon,\eta_{0}\geq\eta_{(K)}\right) \\
    =\tilde{\mathbb{P}}\left(\eta_{(K)}\geq\eta_{(1)}+\left(K-1\right)\varepsilon,\eta_{0}\geq\eta_{(K)}\right) \\
    =\tilde{\mathbb{P}}\left(\eta_{(1)}+\left(K-1\right)\varepsilon\leq\eta_{(K)}\leq\eta_{0}\right) \\
    \leq\tilde{\mathbb{P}}\left(\eta_{(1)}+\left(K-1\right)\varepsilon\leq\eta_{0}\right).
\end{multline}
When taking the limit $K\rightarrow\infty$, as $\eta_{i}$ for $i=0,1,\dots,K$
under $\tilde{\mathbb{P}}$, is a.s. finite, we get 
\begin{multline}
    \lim_{K\rightarrow\infty}\tilde{\mathbb{P}}\left(\left|\tau_{i}-\tau_{j}\right|\geq\varepsilon\;\mathrm{for\:all}\:(i,j),\:i\neq j,\eta_{0}\geq\eta_{(K)}\right)\\
    \leq \lim_{K\rightarrow\infty}\tilde{\mathbb{P}}\left(\eta_{(1)}+\left(K-1\right)\varepsilon\leq\eta_{0}\right)=0.
\end{multline}
Now, on the event $\{\eta_{(k-1)}\leq\eta_{0}<\eta_{(K)}\}$, we have
that $\tau_{(K)}=\eta_{0}$ and $\tau_{(1)}=\eta_{(1)}$. By a similar
reasoning as above, 
\begin{multline}
\tilde{\mathbb{P}}\left(\left|\tau_{i}-\tau_{j}\right|\geq\varepsilon\;\mathrm{for\:all}\:(i,j),\:i\neq j,\eta_{(k-1)}\leq\eta_{0}<\eta_{(K)}\right)\\
\leq\tilde{\mathbb{P}}\left(\tau_{(K)}\geq\tau_{(1)}+\left(K-1\right)\varepsilon,\eta_{(k-1)}\leq\eta_{0}<\eta_{(K)}\right)\\
=\tilde{\mathbb{P}}\left(\eta_{0}\geq\eta_{(1)}+\left(K-1\right)\varepsilon,\eta_{(k-1)}\leq\eta_{0}<\eta_{(K)}\right)\\
\leq\tilde{\mathbb{P}}\left(\eta_{(1)}+\left(K-1\right)\varepsilon\leq\eta_{0}\right)\overset{K\rightarrow\infty}{\longrightarrow}0.
\end{multline}
The result follows after taking an expectation. The interchange in
the expectation and limit as $K\rightarrow\infty$ is justified as
$\tilde{\mathbb{P}}\left(\left|\tau_{i}-\tau_{j}\right|\geq\varepsilon\;\mathrm{for\:all}\:(i,j),\:i\neq j\right)$
is bounded by 1.
\end{proof}
Consequently, the number of G-SIBs in the economy needs to be restricted
by regulators to ensure that the probability of a market failure is
at an acceptable level. If the probability of a market failure as
implied by the number of existing G-SIBs is too high, then Theorem
\ref{Thm Market Failure Probability} enables the regulators to select
the number of G-SIBs such that the market failure probability is below
some given threshold. This implies, of course, that the excess G-SIBs
must be broken-up into smaller banks. 

Instead of breaking-up the G-SIBs, regulators can alternatively control
the probability of a market failure by requiring the existing G-SIBs
to change their asset/liability structures. This tool is discussed
in the next section.

\subsection{Changing the State of the Economy and Banks' Balance Sheets}

This section explores the impact of changing the state of economy
vector $X_{r}$ on the market failure probability. The idea, of course,
is that some of the inputs are under the control of the regulators,
e.g. required capital of a G-SIB. We investigate the impact of changes
in the initial conditions on the market failure probability. To facilitate
the exposition, let $X_{r}=(x_{1}(r),...,x_{d}(r))$, so that $\alpha_{i}(X_{r})=\alpha_{i}(x_{1}(r),...,x_{d}(r))$.

We redefine the default time for the $i^{th}$ G-SIB due to \emph{idiosyncratic
events} and the first time that a market-wide stress event occurs
in the following way: 
\begin{align*}
\eta_{i} & :=\inf\left\{ s:\alpha_{i}\left(X_{0}\right)+A_{i}(s)\geq Z_{i}\right\} \text{ for }i=1,\dots,K\\
\eta_{0} & :=\inf\left\{ s:\alpha_{0}\left(X_{0}\right)+A_{0}(s)\geq Z_{0}\right\} .
\end{align*}
Just as before, the default time of the $i^{th}$ G-SIB is 
\[
\tau_{i}=\min\left(\eta_{0},\eta_{i}\right).
\]
Then, it is easy to check that
\begin{align*}
\mathbb{P}\left(\eta_{i}>t|\left(X_{u}\right)_{0\leq u\leq t}\right) & =\exp\left(-\alpha_{i}\left(X_{0}\right)-A_{i}(t)\right)\\
\mathbb{P}\left(\tau_{i}>t|\left(X_{u}\right)_{0\leq u\leq t}\right) & =\exp\left(-\alpha_{i}\left(X_{0}\right)-\alpha_{0}\left(X_{0}\right)-A_{i}(t)-A_{0}(t)\right).
\end{align*}
To ensure that the probability distributions of $\eta_{i}$ and $\tau_{i}$
are correctly defined, we assign a positive probability to the event
$\left\{ \eta_{i}=0\right\} $ such that
\[
\mathbb{P}\left(\eta_{i}=0|X_{0}\right)=1-\exp\left(-\alpha_{i}\left(X_{0}\right)\right).
\]
This implies that 
\[
\mathbb{P}\left(\tau_{i}=0|X_{0}\right)=1-\exp\left(-\alpha_{i}\left(X_{0}\right)-\alpha_{0}\left(X_{0}\right)\right).
\]
The interpretation is that there is a positive probability of an ``instantaneous''
default at $t=0$. Under these modifications, we have the following
result.
\begin{thm}
(Comparative Statics)
\begin{multline}
    \frac{\partial}{\partial x_{\ell}(0)}\mathbb{P}\left(\left|\tau_{i}-\tau_{j}\right|<\varepsilon\text{ for some }(i,j)\in\left(1,...,K\right)\times\left(1,\dots,K\right),{i\neq j}\right)=\\
    \mathbb{E}\left[\left(\sum\limits _{i=0}^{K}\frac{\partial\alpha_{i}(X_{0})}{\partial x_{\ell}(0)}\right)\tilde{\mathbb{P}}\left(\left|\tau_{i}-\tau_{j}\right|\geq\varepsilon\text{ for all }(i,j),{i\neq j}\right)\right]=\\
    \mathbb{E}\Biggl[\left(\sum\limits _{i=0}^{K}\frac{\partial\alpha_{i}(X_{0})}{\partial x_{\ell}(0)}\exp\left(-\sum\limits _{i=0}^{K}\alpha_{i}(X_{0})\right)\right)\sum_{j\in P_K}\int\limits _{0}^{\infty}\hat{f}_{j_{1}}(x_{1})\int\limits _{x_{1}+\varepsilon}^{\infty}\hat{f}_{j_{2}}(x_{2})\\
    \int\limits _{x_{2}+\varepsilon}^{\infty}\hat{f}_{j_{3}}(x_{3})\dots \int\limits _{x_{K-2}+\varepsilon}^{\infty}\hat{f}_{j_{K-1}}(x_{K-1})\exp\left[-A_{j_{K}}(x_{K-1}+\varepsilon)-A_{0}(x_{K-1}+\varepsilon)\right]\\
    \,dx_{K-1} \dots \,dx_{3}\,dx_{2}\,dx_{1}\Biggr]
\end{multline}
where  $\hat{f}_{j_{k}}(x):={\alpha_{j_{k}}(X_{x})}\exp\left[-A_{j_{k}}(x)\right]$\footnote{It is worth noting that $\hat{f}_{j_k}$ is not the density of $\eta_{j_k}$. In this case, the density is $f_{j_k} (x)= \alpha_{j_{k}}(X_{x})\exp\left[-\alpha_{j_k}(X_0) - A_{j_{k}}(x)\right] $.
Hence, the change in notation.}.
The derivative is taken with respect to the $\ell$ component of the $\mathbb{R}^d$ vector $X$ at time $t=0$, i.e., 
we are analyzing the market failure probability when changing the initial state of the $\ell$ component of the
economy vector.
\end{thm}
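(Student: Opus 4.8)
The plan is to reduce the statement to Theorem~\ref{Thm Market Failure Probability}, re-derived for the perturbed default times, and then to differentiate a representation in which all dependence on $x_{\ell}(0)$ has been isolated into one explicit exponential factor. First I would write
\[
\mathbb{P}(|\tau_i-\tau_j|<\varepsilon\text{ for some }i\ne j)=1-\mathbb{E}[\tilde{\mathbb{P}}(|\tau_i-\tau_j|\ge\varepsilon\text{ for all }i\ne j)],
\]
so that it suffices to compute the conditional no-failure probability and then differentiate under the expectation.

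Second, I would rerun the proof of Theorem~\ref{Thm Market Failure Probability} essentially verbatim, now using the modified conditional survival functions $\tilde{\mathbb{P}}(\eta_i>t)=e^{-\alpha_i(X_0)-A_i(t)}=e^{-\alpha_i(X_0)}e^{-A_i(t)}$ and, correspondingly, the absolutely continuous part $e^{-\alpha_i(X_0)}\hat f_i$ of the conditional law of $\eta_i$, where $\hat f_i(x)=\alpha_i(X_x)e^{-A_i(x)}$. The combinatorial skeleton is unchanged: the no-failure event is still described by requiring $\eta_0$ to be the largest or second largest of $\eta_0,\dots,\eta_K$ while the $K$ smallest of $\eta_0,\dots,\eta_K$ have consecutive gaps at least $\varepsilon$, the sum is still over the $2(K!)$ admissible orderings, and the two innermost integrals (the ones carrying $\eta_{j_K}$ and $\eta_0$) still collapse exactly as in \eqref{eq: (2)}--\eqref{eq: (4)}. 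The only new feature is that each of the $K+1$ variables contributes one extra scalar factor $e^{-\alpha_i(X_0)}$; since every admissible ordering uses each of $\eta_0,\dots,\eta_K$ exactly once, these factors always multiply to $\exp(-\sum_{i=0}^K\alpha_i(X_0))$, which therefore pulls out of the permutation sum and yields
\[
\tilde{\mathbb{P}}(|\tau_i-\tau_j|\ge\varepsilon\text{ for all }i\ne j)=\exp\Bigl(-\sum_{i=0}^K\alpha_i(X_0)\Bigr)\,G,
\]
where $G$ is precisely the permutation sum of nested integrals in \eqref{eq: result} with every $f_{j_k}$ replaced by $\hat f_{j_k}$. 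The point of this rewriting is that $G$ is a functional of the path $X$ that carries no separate dependence on $x_{\ell}(0)$: in the perturbed model the initial value enters only through the explicit terms $\alpha_i(X_0)$.

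Third, I would differentiate. Because $\exp(-\sum_{i=0}^K\alpha_i(X_0))$ is the only factor depending on $x_{\ell}(0)$, the chain rule gives $\partial_{x_{\ell}(0)}\tilde{\mathbb{P}}(\text{no failure})=-(\sum_{i=0}^K\partial_{x_{\ell}(0)}\alpha_i(X_0))\,\tilde{\mathbb{P}}(\text{no failure})$. Substituting this into $\mathbb{P}(\text{failure})=1-\mathbb{E}[\tilde{\mathbb{P}}(\text{no failure})]$ and interchanging the derivative with the expectation gives the first displayed equality of the theorem; replacing $\tilde{\mathbb{P}}(\text{no failure})$ by $e^{-\sum_i\alpha_i(X_0)}G$ gives the second; and spelling $G$ out as the nested integral over $\hat f_{j_1},\dots,\hat f_{j_{K-1}}$ with innermost factor $\exp[-A_{j_K}(x_{K-1}+\varepsilon)-A_0(x_{K-1}+\varepsilon)]$ gives the third.

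The step I expect to be the main obstacle is the interchange of $\partial_{x_{\ell}(0)}$ with $\mathbb{E}[\cdot]$: this requires a dominated-convergence argument, and it is enough to assume the $\alpha_i$ are $C^1$ with $\partial_{x_{\ell}}\alpha_i$ locally bounded, since then $|\partial_{x_{\ell}(0)}\tilde{\mathbb{P}}(\text{no failure})|\le\sum_{i=0}^K|\partial_{x_{\ell}(0)}\alpha_i(X_0)|$ (using $\tilde{\mathbb{P}}(\text{no failure})\le 1$) is an integrable dominating function on a neighbourhood of the point of differentiation. A secondary point deserving care in the second step is the bookkeeping of the atoms $\{\eta_i=0\}$ that the perturbed model introduces: on the no-failure event one has $\eta_0>0$ necessarily and at most one idiosyncratic $\eta_i$ can vanish, so one must check that these boundary contributions are organized consistently with the factored representation above. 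Everything else is the routine rerun of the computation behind Theorem~\ref{Thm Market Failure Probability}.
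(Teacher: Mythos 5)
Your route is the same as the paper's: write the failure probability as $1-\mathbb{E}\bigl[\tilde{\mathbb{P}}(\text{no failure})\bigr]$, rerun Theorem \ref{Thm Market Failure Probability} with the shifted conditional survival functions so that $\tilde{\mathbb{P}}(\text{no failure})=\exp\bigl(-\sum_{i=0}^{K}\alpha_i(X_0)\bigr)G$ with $G$ the permutation sum built from the $\hat f_{j_k}$, differentiate only the exponential prefactor, and interchange $\partial_{x_\ell(0)}$ with $\mathbb{E}$ (the paper justifies the interchange simply by $\tilde{\mathbb{P}}\le 1$; your dominated-convergence remark is the more careful version of that step). The one caveat is the check you defer at the end: the atoms are not a routine boundary verification that comes out ``consistent with the factored representation.'' On the no-failure event the smallest idiosyncratic time $\eta_{j_1}$ can sit at its atom, and that event contributes a term of the form $\bigl(1-e^{-\alpha_{j_1}(X_0)}\bigr)e^{-\sum_{i\neq j_1}\alpha_i(X_0)}\times(\text{a nested integral started at }\varepsilon)$; for instance, for $K=2$ the exact conditional no-failure probability contains the extra summand $\bigl(1-e^{-\alpha_1(X_0)}\bigr)e^{-(\alpha_2+\alpha_0)(X_0)}e^{-(A_2+A_0)(\varepsilon)}$. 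Such terms do not carry the factor $e^{-\alpha_{j_1}(X_0)}$, so they do not fold into $\exp\bigl(-\sum_{i=0}^{K}\alpha_i(X_0)\bigr)G$, and differentiating them does not produce $-\bigl(\sum_{i=0}^{K}\partial_{x_\ell(0)}\alpha_i(X_0)\bigr)$ times themselves. The published proof tacitly integrates only the absolutely continuous part of $\eta_{j_1}$ (its innermost-to-outermost nested integral uses $\hat f_{j_1}$ on $[0,\infty)$ with the uniform prefactor), so your derivation reproduces the paper's argument exactly; but be aware that the deferred check, if actually carried out, shows the factorization --- and hence the identity $\partial_{x_\ell(0)}\tilde{\mathbb{P}}(\text{no failure})=-\bigl(\sum_{i=0}^{K}\partial_{x_\ell(0)}\alpha_i(X_0)\bigr)\tilde{\mathbb{P}}(\text{no failure})$ --- holds only if instantaneous defaults at $t=0$ are excluded from the no-failure computation or accounted for by additional terms.
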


\begin{proof}
First note that 
\begin{multline*}
    \frac{\partial}{\partial x_{\ell}(0)}\mathbb{P}\left(\left|\tau_{i}-\tau_{j}\right|<\varepsilon\text{ for some }(i,j), i \neq j \right)=\\
    \frac{\partial}{\partial x_{\ell}(0)}\mathbb{E}\left[\tilde{\mathbb{P}}\left(\left|\tau_{i}-\tau_{j}\right|<\varepsilon\text{ for some }(i,j), i \neq j\right)\right].
\end{multline*}
As $\tilde{\mathbb{P}}\left(\left|\tau_{i}-\tau_{j}\right|<\varepsilon\text{ for some }(i,j), i \neq j\right)\leq1$,
we can interchange expectation and derivative and so, it suffices
to find 
\[
\frac{\partial}{\partial x_{\ell}(0)}\tilde{\mathbb{P}}\left(\left|\tau_{i}-\tau_{j}\right|<\varepsilon\text{ for some }(i,j), i \neq j\right)
\]
and then take an expectation.

Now, with the change of definition of $\eta_{i}$ and $\eta_{0}$,
by a similar fashion as in Theorem \ref{Thm Market Failure Probability},
we get that 
\begin{multline*}
    \tilde{\mathbb{P}}\left(\left|\tau_{i}-\tau_{j}\right|<\varepsilon\text{ for some }(i,j), i \neq j\right)=\\
    1-\tilde{\mathbb{P}}\left(\left|\tau_{i}-\tau_{j}\right|\geq\varepsilon\text{ for all }(i,j),{i\neq j}\right)=\\
    1-\exp\left(-\sum\limits _{i=0}^{K}\alpha_{i}(X_{0})\right)\sum_{j\in P_K}\int\limits _{0}^{\infty}\hat{f}_{j_{1}}(x_{1})\int\limits _{x_{1}+\varepsilon}^{\infty}\hat{f}_{j_{2}}(x_{2})\int\limits _{x_{2}+\varepsilon}^{\infty}\hat{f}_{j_{3}}(x_{3})\dots\\
    \int\limits _{x_{K-2}+\varepsilon}^{\infty}\hat{f}_{j_{K-1}}(x_{K-1})\exp\left[-A_{j_{K}}(x_{K-1}+\varepsilon)-A_{0}(x_{K-1}+\varepsilon)\right]\,dx_{K-1}\dots\,dx_{1}.
\end{multline*}

Differentiating the previous equation with respect to $x_{\ell}(0)$,
we obtain 
\begin{multline*}
    \frac{\partial}{\partial x_{\ell}(0)}\tilde{\mathbb{P}}\left(\left|\tau_{i}-\tau_{j}\right|<\varepsilon\text{ for some }(i,j), i \neq j\right)\\
    =-\frac{\partial}{\partial x_{\ell}(0)}\tilde{\mathbb{P}}\left(\left|\tau_{i}-\tau_{j}\right|\geq\varepsilon\text{ for all }(i,j),{i\neq j}\right)\\
    =\left(\sum\limits _{i=0}^{K}\frac{\partial\alpha_{i}(X_{0})}{\partial x_{\ell}(0)}\right)\tilde{\mathbb{P}}\left(\left|\tau_{i}-\tau_{j}\right|\geq\varepsilon\text{ for all }(i,j),{i\neq j}\right)\\
    =\left(\sum\limits _{i=0}^{K}\frac{\partial\alpha_{i}(X_{0})}{\partial x_{\ell}(0)}\exp\left(-\sum\limits _{i=0}^{K}\alpha_{i}(X_{0})\right)\right)\sum_{j\in P_K}\int\limits _{0}^{\infty}\hat{f}_{j_{1}}(x_{1})\int\limits _{x_{1}+\varepsilon}^{\infty}\hat{f}_{j_{2}}(x_{2})\dots\\
    \int\limits _{x_{K-2}+\varepsilon}^{\infty}\hat{f}_{j_{K-1}}(x_{K-1})\exp\left[-A_{j_{K}}(x_{K-1}+\varepsilon)-A_{0}(x_{K-1}+\varepsilon)\right]\,dx_{K-1}\dots\,dx_{1}.
\end{multline*}
\end{proof}
Given estimates of the relevant intensities, these partial derivatives
are easily computed and they provide the information that regulators
can use to determine the impact of their regulatory restrictions on
the probability of a market failure.
\begin{rem}
(\emph{Linear Approximation})

For some simpler calculations, if $\alpha_{i}(X_{r})=\sum_{j=1}^{d}\beta_{ij}x_{j}(r)$
for $\beta_{ij}\in \mathbb{R}^+$ and $x_j(r)>0$ for $j=1,\dots, d$ and all $r>0$, then
\begin{multline}
\frac{\partial}{\partial x_{\ell}(0)}\mathbb{P}\left(\left|\tau_{i}-\tau_{j}\right|<\varepsilon\text{ for some }(i,j)\in\left(1,...,K\right)\times\left(1,\dots,K\right),{i\neq j}\right)=\\
\left(\sum\limits _{i=0}^{K}\beta_{i\ell}\right)\mathbb{P}\left(\left|\tau_{i}-\tau_{j}\right|\geq\varepsilon\text{ for all }(i,j),{i\neq j}\right).
\end{multline}
\end{rem}

\bibliographystyle{authordate1}
\bibliography{bibliography.bib}

\end{document}